\definecolor{DarkBlue}{rgb}{0.1,0.1,0.5}
\definecolor{DarkGreen}{rgb}{0.1,0.5,0.1}
\newcommand{\extra}[1]{}
\newtheorem{theorem}{Theorem}
\newtheorem{corollary}{Corollary}
\newtheorem{definition}{Definition}
\newtheorem{lemma}{Lemma}
\newtheorem{proposition}{Proposition}
\newtheorem{claim}{Claim}
\def\squareforqed{\hbox{\rlap{$\sqcap$}$\sqcup$}}
\def\qed{\ifmmode\squareforqed\else{\unskip\nobreak\hfil
\penalty50\hskip1em\null\nobreak\hfil\squareforqed
\parfillskip=0pt\finalhyphendemerits=0\endgraf}\fi}
\def\endenv{\ifmmode\;\else{\unskip\nobreak\hfil
\penalty50\hskip1em\null\nobreak\hfil\;
\parfillskip=0pt\finalhyphendemerits=0\endgraf}\fi}
\renewenvironment{proof}{\noindent \textbf{{Proof~} }}{\qed\medskip}
\newenvironment{proof+}[1]{\noindent \textbf{{Proof #1~} }}{\qed\medskip}
\mathchardef\ordinarycolon\mathcode`\:
\def\vcentcolon{\mathrel{\mathop\ordinarycolon}}
\DeclareMathOperator*{\argmin}{arg\,min}
\DeclareMathOperator*{\argmax}{arg\,max}
\newcommand{\PE}{\rm PE}
\newcommand{\Mat}{\mathcal{M}}
\newcommand{\Apo}{\mathcal{A}}
\newcommand{\MMS}{\textsc{MMS}}
\newcommand{\XOS}{\textsc{XOS}}
\newcommand{\Alloc}{\mathcal{A}}
\newcommand{\Palloc}{\mathcal{P}}
\newcommand{\NW}{\textsc{NSW}}
\newcommand{\EFone}{\textsc{EF}1}
\newcommand{\PO}{\textsc{PO}}
\title{\bfseries Truthful and Fair Mechanisms for Matroid-Rank Valuations}
\author{Siddharth Barman\thanks{Indian Institute of Science. {\tt barman@iisc.ac.in}} \and Paritosh Verma\thanks{Purdue University. {\tt paritoshverma97@gmail.com}}}
\date{}
\begin{document}

\maketitle

\begin{abstract}
We study the problem of allocating indivisible goods among strategic agents. We focus on settings wherein monetary transfers are not available and each agent's private valuation is a submodular function with binary marginals, i.e., the agents' valuations are matroid-rank functions. In this setup, we establish a notable dichotomy between two of the most well-studied fairness notions in discrete fair division; specifically, between envy-freeness up to one good ($\EFone$) and maximin shares ($\MMS$).   

First, we show that a Pareto-efficient mechanism of Babaioff et al.~(2021) is group strategy-proof for finding $\EFone$ allocations, under matroid-rank valuations. The group strategy-proofness guarantee strengthens the result of Babaioff et al.~(2021), that establishes truthfulness (individually for each agent) in the same context. Our result also generalizes a work of Halpern et al.~(2020), from binary additive valuations to the matroid-rank case.  

Next, we establish that an analogous positive result cannot be achieved for $\MMS$, even when considering truthfulness on an individual level. Specifically, we prove that, for matroid-rank valuations, there does not exist a truthful mechanism that is \emph{index oblivious}, Pareto efficient, and maximin fair.   

For establishing our results, we develop a characterization of truthful mechanisms for matroid-rank functions. This characterization in fact holds for a broader class of valuations (specifically, holds for binary $\XOS$ functions) and might be of independent interest. 
\end{abstract}

\section{Introduction}
The field of discrete fair division studies the allocation of indivisible goods (i.e., goods that cannot be fractionally divided)  among agents with possibly different preferences. Such allocation problems arise naturally in many real-world settings, e.g., assignment of flats in public housing estates \cite{deng2013story}, allocating courses to students \cite{budish2017course}, or distributing computational resources. Motivated, in part, by such applications, in recent years a significant body of work has been directed towards the development of algorithms that find fair and economically efficient allocations~\cite{BCEL+16,endriss2017trends}.

In the fair division literature, {envy-freeness} is one of the most prominent fairness criterion. An allocation is said to be envy-free iff every agent values her bundle (allocated goods) at least as much as the bundle of any other agent. Notably, when the goods are indivisible, the existence of envy-free allocations is  ruled out, even in rather simple instances; consider a setting with a single indivisible good that is desired by two agents. In light of such non-existence results, 
meaningful relaxations of envy-freeness (and other classic fairness criteria) have been the focus of research in discrete fair division. In the context of indivisible goods, the two most prominent notions of fairness are {envy-freeness up to one good} ($\EFone$) \cite{LMMS04,Bud11} and {maximin shares} ($\MMS$) \cite{Bud11}.

An allocation is said to be envy-free up to one good $(\EFone)$ iff each agent values her bundle no less than the bundle of any other agent, subject to the removal of one good from the other agent's bundle. For indivisible items, $\EFone$ allocations are guaranteed to exist under monotone valuations and can be computed efficiently~\cite{LMMS04}. In fact, under additive\footnote{An agent's valuation function $v$ is said to be additive iff $v(S) = \sum_{g \in S} v(\{g\})$ for every subset of goods $S$.} valuations, there necessarily exist allocations that are both $\EFone$ and Pareto efficient (i.e., economically efficient)~\cite{CKMP+19,barman2018finding}. 

Maximin share is a threshold-based fairness notion. That is, every agent has a threshold---referred to as her \emph{maximin share}---and an allocation is deemed to be maximin fair ($\MMS$) iff in the allocation each agent receives a value at least as much as her maximin share. Conceptually, the threshold follows from executing a discrete version of the cut-and-choose protocol: among $n$ participating agents, the maximin share for an agent $i$ is defined as the maximum value that $i$ can guarantee for herself by partitioning the set of (indivisible) goods into $n$ subsets, and then receiving a minimum valued (according to $i$'s valuation) one. In contrast to envy-freeness up to one good, $\MMS$ allocations are not guaranteed to exist under additive valuations \cite{KPW16,PW14}. This fairness notion, however, admits strong approximation guarantees for additive valuations and beyond; see \cite{garg2020improved,ghodsi2018fair}, and references therein.  

Equipped with relevant fairness criteria (such as $\EFone$ and $\MMS$), research in discrete fair division has primarily focussed on existence and computational  tractability of fairness notions, along with their impact on economic efficiency.  Another key desiderata in this context---and resource-allocations settings, in general---is that of \emph{truthfulness}, i.e., one wants mechanisms wherein the participating agents cannot gain by misreporting their valuations. However, in settings wherein monetary transfers are not available, these three central objectives (of fairness, economic efficiency, and truthfulness) cannot be achieved together, even under additive valuations; it is known that, under additive valuations, the only Pareto-efficient and truthful mechanism is serial dictatorship \cite{klaus2002strategy}, which is notably unfair. 

Motivated by these considerations, an important thread of work is aimed at identifying expressive valuation classes that admit truthful and fair mechanisms (without money). Of note here are valuations with binary marginals, i.e., valuations that change by at most one upon the addition or removal of a good, from any bundle. Such dichotomous preferences have been extensively studied in fair division literature; see, e.g.,~\cite{BM04,BL16,KPS15}. Here, two well-studied function classes, in order of containment, are: binary additive valuations and binary submodular\footnote{Recall that a set function $v : 2^{[m]} \mapsto \mathbb{Z}_+$ is said to be submodular iff $v(S \cup \{g\}) - v(S) \geq v(T \cup \{g\}) - v(T)$, for all subsets $S \subseteq T \subseteq [m]$ and $g \notin T$} valuations. Binary additive (respectively, submodular) valuations are additive (submodular) functions with binary marginals. We note that binary submodular functions characteristically correspond to matroid-rank functions \cite{schrijver2003combinatorial} and, hence, constitute a combinatorially expressive function class. Matroid-rank valuations capture preferences in many resource-allocation domains; see \cite{benabbou2020finding} for pragmatic examples. 

Halpern et al.~\cite{halpern2020fair} provide a \emph{group strategy-proof} mechanism for binary additive valuations. Their mechanism is based on maximizing the Nash social welfare (i.e., the geometric mean of the agents' valuations) with a lexicographic tie-breaking rule. Nash optimal allocations are Pareto efficient. Also, under binary additive valuations, such allocations are known to be $\MMS$ as well as $\EFone$. Hence, for binary additive valuations, the work of Halpern et al.~\cite{halpern2020fair} achieves all the three desired properties; their mechanism in fact can be executed in polynomial time.  

For the broader class of binary submodular valuations, Babaioff et al.~\cite{babaioff2020fair} develop a truthful, Pareto efficient, and fair mechanism. This work considers Lorenz domination as a fairness criterion and, hence as implications, obtains $\EFone$ and $\nicefrac{1}{2}$-$\MMS$ guarantees. 

Contributing to this thread of work, the current studies mechanism design, without money, for fairly allocating indivisible goods. We focus on settings wherein the agents' valuations are matroid-rank functions (i.e., are binary submodular functions) and establish a notable dichotomy between $\EFone$ and $\MMS$. 

\paragraph{Our Results.}
First, we show that the Pareto-efficient mechanism of Babaioff et al.~\cite{babaioff2020fair} is group strategy-proof for finding $\EFone$ allocations, under matroid-rank valuations (Theorem \ref{theorem:gsp}). The group strategy-proofness guarantee strengthens the result of Babaioff et al.~\cite{babaioff2020fair}, that establishes truthfulness (individually for each agent). Our result also generalizes the work of Halpern et al.~\cite{halpern2020fair}, from binary additive valuations to the matroid-rank case.  

Next, we establish that an analogous positive result cannot be achieved for $\MMS$, even when considering truthfulness for individual agents. Specifically, we prove that, for matroid-rank valuations, there does not exist a truthful mechanism that is \emph{index oblivious}, Pareto efficient, and maximin fair (Theorem \ref{theorem-mms-negative}). For establishing our results, we develop a characterization of truthful mechanisms under matroid-rank functions (Theorem \ref{theorem_truthful_monotonic}). This characterization in fact holds for a broader class of valuations (specifically, holds for binary $\XOS$ functions) and might be of independent interest. 

\paragraph{Additional Related Work.} Barman and Verma~\cite{barman2021existence} have shown that, under matroid-rank valuations, allocations that are $\MMS$ and Pareto efficient always exist. Hence, considering the existence of truthful mechanisms in this context is a well-posed question. Recall that in contrast to the matroid-rank case, under additive valuations, (exact) $\MMS$ allocations are not guaranteed to exist \cite{KPW16,PW14}. Both positive and negative mechanism-design results have been obtained for various fairness notions under additive valuations; see \cite{amanatidis2017truthful,amanatidis2016truthful,markakis2011worst}, and references therein. These results are incomparable with the ones obtained in the current work, since they address additive valuations and, by contrast, we focus on matroid-rank functions.  
\section{Notation and Preliminaries}
\label{section:notation}
We study mechanisms, without money, for partitioning $[m]=\{1, 2, \ldots, m\}$ indivisible goods among $[n]=\{1, 2, \ldots, n\}$ agents in a fair and economically efficient manner. The cardinal preferences of the agents $i \in [n]$, over subsets of goods, are specified via valuation functions $v_i: 2^{[m]} \mapsto \mathbb{R}_+$; here, $v_i(S) \in \mathbb{R}_+$ denotes the value that agent $i \in [n]$ has for a subset of goods $S \subseteq [m]$. The valuation functions of all agents are collectively represented by a valuation profile $\vb{v} = (v_1, v_2, \ldots, v_n)$. In this setup, an instance of the fair division problem corresponds to a tuple $\langle [m], [n], \vb{v} \rangle$. Our goal is to obtain fair and economically efficient allocations. Specifically, an \emph{allocation} $\Alloc = (A_1, A_2, \ldots, A_n)$ is an $n$-partition of all the goods (i.e., $\cup_{i=1}^n A_i = [m]$ and $A_i \cap A_j = \emptyset$ for all $i \neq j$) wherein subset $A_i$ is assigned to agent $i \in [n]$. The assigned subsets will be referred to as bundles.

We will use the term \emph{partial allocation} to refer to a collection of pairwise-disjoint subsets of goods $\Palloc = (P_1, P_2, \ldots, P_n)$, in which subset $P_i$ is assigned to agent $i$. In contrast to an allocation, for a partial allocation $\mathcal{P} = (P_1, \ldots, P_n)$ we may have $\cup_{i=1}^n P_i \subsetneq [m]$, i.e., it is not necessary that all the goods are assigned among the agents. Note that, for a partial allocation $\Palloc= (P_1, \ldots, P_n)$, the set of goods $[m] \setminus \left( \bigcup_{i \in [n]} P_i \right)$ remain unallocated, and $P$ is a complete allocation iff $[m] \setminus \left( \bigcup_{i \in [n]} P_i \right)= \emptyset$.

We now define the notions of fairness and economic efficiency considered in this work. 

\paragraph{Nash social welfare and Pareto optimality.} The Nash social welfare $\NW(\cdot)$ of an (partial) allocation $\Alloc=(A_1, \allowbreak \ldots, A_n)$ is the geometric mean of the agents' values in $\Alloc$, i.e., $\NW(\Alloc) \coloneqq  \left( \Pi_{i=1}^n v_i(A_i) \right)^{1/n}$. An (partial) allocation that maximizes Nash social welfare among the set of all allocations is called a Nash optimal allocation.

Given two (partial) allocations $\Alloc = (A_1, A_2, \ldots A_n)$ and $\Alloc' = (A'_1, A'_2, \ldots A'_n)$, we say that $\Alloc$ \emph{Pareto dominates} $\Alloc'$ iff for every agent $i \in [n]$, we have $v_i(A_i) \geq v_i(A'_i)$ and this inequality is strict for at least one agent. An allocation is referred to as \emph{Pareto optimal} (or $\PO$) iff there is no other allocation that Pareto-dominates it. Throughout, we will consider Pareto optimality across all allocations. In particular, under the valuations considered in the current work, there could exist partial allocations that are Pareto optimal among all (complete) allocations. 

\paragraph{Fairness notions.} This paper studies two prominent fairness criteria: envy-freeness up to one good ($\EFone$) and maximin fairness ($\MMS$). An (partial) allocation $\Palloc= (P_1, \ldots, P_n)$ is said to be $\EFone$ iff for every pair of agents $i,j \in [n]$, with $P_j \neq \emptyset$, there exists $g \in P_j$ such that $v_i(P_i) \geq v_i(P_j  \setminus \{g \})$. 

In a fair division instance $\langle [m], [n], \vb{v} \rangle$, the maximin share of agent $i \in [n]$ is defined as  
\begin{align*}
\mu_i = \max_{(X_1, \ldots, X_n)} \min_{j \in [n]} \ v_i(X_j). 
\end{align*}
Here, the maximization is considered over all possible allocations. With these agent-specific thresholds in hand, we say that an (partial) allocation $\Palloc= (P_1, \ldots, P_n)$ is maximin fair ($\MMS$) iff $v_i(P_i) \geq \mu_i$ for all agents $i \in [n]$.

\paragraph{Truthful Mechanisms.} In the current context, a mechanism, $f(\cdot)$, is a mapping from (reported) valuation profiles $\vb{v} = (v_1, v_2, \ldots, v_n)$ to (partial) allocations $\Alloc$. That is, the mechanism asks each agent $i$ to report a valuation function $v_i$ and assigns the bundles from the (partial) allocation $(A_1, \ldots, A_n) = f(v_1, \ldots, v_n)$. We will solely address deterministic mechanisms. A key desiderata is to identify mechanisms $f$ wherein it is in the best interest of each agent to report her true valuation to $f$, i.e., no agent can gain by misreporting her valuation. This requirement is formally realized through \emph{truthfulness} (also referred to as {strategy-proofness}), and it's stronger variant, \emph{group strategy-proofness}.

A mechanism $f$ is said to be \emph{truthful} iff for each agent $i \in [n]$, any valuation profile $(v_1, v_2, \ldots, v_n)$, and any valuation $v'_i$ we have $v_i(A_i) \geq v_i(A'_i)$; where $(A_1, \ldots, A_n) = f(v_1, v_2, \ldots, v_n)$ and $(A'_1, \ldots, A'_n) = f(v_1, \ldots, v_{i-1}, v'_i, v_{i+1}, \ldots, v_n)$.  
Indeed, truthfulness ensures that that agent $i$ does not receive a higher-valued (under her true valuation) bundle by misreporting her valuation to be $v'_i$. The notion can be strengthened by considering subsets of misreporting agents, instead of a single agent. 

\begin{definition}[Group Strategy-Proofness]
A mechanism $f$ is said to be group strategy-proof iff for each subset of agents $C \subseteq [n]$ and any pair of valuation profiles $(v_1, v_2, \ldots, v_n)$ and $(v'_1, v'_2, \ldots, v'_n)$, with the property that $v_j = v'_j$ for all $j \notin C$, we necessarily have $v_i(A_i) \geq v_i(A'_i)$ for some agent $i \in C$. 
\end{definition}

This definition equivalently asserts that for a group strategy-proof mechanism $f$ there does not exist a subset of colluding agents $C$ such that \emph{all} of them gain (i.e., $v_i(A'_i) > v_i(A_i)$ for all $i \in C$) by misreporting together. 

An even more demanding notion is that of \emph{strong group strategy-proofness} that requires the nonexistence of any colluding subset of agents $C$ wherein $v_i(A'_i) \geq v_i(A_i)$ for all $i \in C$, and the inequality has to be strict for at least one agent in $C$. Notably, with this stringent notion, one cannot achieve Pareto efficiency (let alone fairness) for the valuations considered in this work \cite{babaioff2020fair,BM04}.

We will throughout say that a mechanism $f$ is Pareto efficient iff, for any given valuation profile $\vb{v}$, the mechanism outputs an (partial) allocation that is Pareto optimal (with respect to $\vb{v}$). Similarly, a mechanism is said to be $\EFone$ (respectively $\MMS$) iff, for any given valuation profile, it outputs an (partial) allocation that is $\EFone$ (respectively $\MMS$). 

The current work addresses fair division instances in which, for each agent $i \in [n]$, the valuation function $v_i$ is the rank function of a matroid $\mathcal{M}_i = \left( [m], \mathcal{I}_i \right)$. Below we define rank functions and other relevant notions from matroid theory. 

For subsets $X \subseteq [m]$ and goods $g \in [m]$, we will use the following shorthands: $X + g \coloneqq X \cup \{g \}$, $X - g \coloneqq X \setminus \{ g \}$ and, $\overline{X} \coloneqq [m] \setminus X$. Also, for notational convenience, we will write $\overline{g}$ to denote the subset $[m] \setminus \{ g \}$.

\subsection{Matroid Preliminaries}
\label{section:matroid-prelims}
A pair $([m], \mathcal{I})$ is said to be a matroid iff $\mathcal{I}$ is a nonempty collection of subsets of $[m]$ (i.e., $\mathcal{I} \subseteq 2^{[m]}$) that satisfies (a) Hereditary property: if $X \in \mathcal{I}$ and $Y \subseteq X$, then $Y \in \mathcal{I}$, and (b) Augmentation property: if $X, Y \in \mathcal{I}$ and $|Y| < |X|$, then there exists $g \in X \setminus Y$ such that $Y + g \in \mathcal{I}$. Given a matroid $\mathcal{M} = ([m], \mathcal{I})$, a subset $I \subseteq [m]$ is said to be \emph{independent} iff $I \in \mathcal{I}$.

For a matroid $\mathcal{M} = ([m], \mathcal{I})$, the rank function $r: 2^{[m]} \mapsto \mathbb{Z}_+$ captures, for each subset $X \subseteq [m]$, the cardinality of the largest independent subset contained in $X$, i.e., $r(X)  \coloneqq \max\{|I| \ : \ I \in \mathcal{I} \text{ and } I \subseteq X\}$.

Rank functions bear binary marginals: $r(X \cup \{g\}) - r(X) \in \{ 0,1 \}$, for all subsets $X \subseteq [m]$ and $g \in [m]$. Also, by definition, rank functions are nonnegative ($r(X) \geq 0$ for all $X \subseteq [m]$) and monotone ($r(Y) \leq r(X)$ for all $Y \subseteq X$). Furthermore, the following characterization is well known~\cite[Chapter~39]{schrijver2003combinatorial}: any submodular function $r$ with binary marginals is in fact a matroid-rank function. 

Note that, if agent $i$'s valuation $v_i$ is the rank function of matroid $\mathcal{M}_i = ( [m], \mathcal{I}_i )$, then, for any subset $S \subseteq [m]$, we have $ v_i(S) \leq |S|$; here, equality holds iff $S$ is an independent set in $\mathcal{M}_i$, i.e., $S \in \mathcal{I}_i$.  With this observation, we next define non-wasteful allocations and mechanisms.  

\paragraph{Non-Wasteful Mechanism.} Under valuations $v_1, \ldots, v_n$, an (partial) allocation $\Alloc = (A_1, A_2, \ldots , A_n)$ is said to be \emph{non-wasteful} iff, for each agent $i \in [n]$, the assigned bundle's value $v_i(A_i) = |A_i|$. Hence, for matroid-rank valuations, this defining condition corresponds to $A_i \in \mathcal{I}_i$, for each $i \in [n]$. Furthermore, a mechanism $f$ is called \emph{non-wasteful} if it yields non-wasteful allocations for all input valuation profiles. Note that any truthful mechanism $f$ can be converted into one that is both truthful and non-wasteful: for every profile $\vb{v}$ and $(A_1, \ldots, A_n) = f(\vb{v})$, the corresponding mechanism returns a largest-cardinality independent subset $A'_i \subseteq A_i$, for each $i \in [n]$. Indeed, $(A'_1, \ldots, A'_n) \in \mathcal{I}_1 \times \ldots \times \mathcal{I}_n$ is a non-wasteful allocation and $v_i(A'_i) = v_i(A_i)$ for all agents $i$. We will establish a stronger result (Proposition \ref{proposition:nw-wlog}) showing that non-wastefulness can be achieved with additional properties and, hence, in the relevant context it can be assumed without loss of generality. 

\paragraph{Exchange Graph and Path Augmentation.} We will use certain well-known constructs from matroid theory. In particular, \emph{exchange graphs}  and the related \emph{path augmentation} operation will be utilized while establishing the results in Section \ref{section:group_strategyproofness}.

Consider a setting wherein, for each agent $i \in [n]$, the valuation $v_i$ is the rank function of matroid $\mathcal{M}_i = ([m], \mathcal{I}_i)$. Here, given a non-wasteful (partial) allocation $\Alloc = (A_1, \allowbreak \ldots, A_n) \in \mathcal{I}_1 \times \ldots \times \mathcal{I}_n$, we define the \emph{exchange graph}, $\mathcal{G}(\mathcal{A})$, to be a directed graph where the set of vertices is $[m]$ (i.e., each vertex corresponds to a good) and there is a directed edge $(g, g')$ in the graph iff for some $i \in [n]$, the good $g \in A_i$, $g' \notin A_i$, and $A_i - g + g' \in \mathcal{I}_i$; hence, exchanging good $g$ with $g'$ in the bundle $A_i$ maintains independence (with respect to $\mathcal{I}_i$).

Now we define the \emph{path augmentation} operation. If $P =(g_1, g_2, \ldots, g_k)$ is a directed path in the exchange graph $\mathcal{G}(\mathcal{A})$, then we define bundle\footnote{Recall that the \emph{symmetric difference} of any two subsets $A$ and $B$ is defined as $A \Delta B \coloneqq (A \setminus B) \cup (B \setminus A)$.} $A_i \Delta P \coloneqq A_i \ \Delta \ \{ g_j , g_{j+1} : g_j \in A_i \}$ for all $i \in [n]$ \footnote{If path $P$ is just a single vertex, then define $A_i \Delta P \coloneqq A_i$}. Hence, $A_i \Delta P$ is obtained by exchanging goods along every edge of $P$ that goes out of the set $A_i$.

Given an agent $i \in [n]$ and an independent set $X \in \mathcal{I}_i$, we define $F_i(X)$ to be the set of goods that can be added to $X$, while still maintaining its independence, i.e., $F_i(X) \coloneqq \{ g \in [m] \setminus X : X + g \in \mathcal{I}_i \}$. 

In Section \ref{section:group_strategyproofness}, we will use the following well-known result (stated in our notation) about the augmentation operation. This lemma ensures that if the augmentation is performed along a shortest path\footnote{Following standard terminology, a shortest path between two vertex sets is a path with the fewest number of edges among all paths that connect the two vertex sets.} in the exchange graph, then independence of all bundles is maintained.

\begin{restatable}[\cite{schrijver2003combinatorial}]{lemma}{LemmaPathAugmentation}
\label{lemma:path-augmentation}
Let $\mathcal{X}' = (X'_1, \ldots, X'_n)$ be any non-wasteful (partial) allocation. Additionally, for a pair of agents $i, j \in [n]$, let $Q = (g_1, g_2, \ldots, g_t)$ be a shortest path in the exchange graph $\mathcal{G}(\mathcal{X}')$ between the vertex sets $F_i(X'_i)$ and $X'_j$ (in particular, $g_1 \in F_i(X'_i)$ and $g_t \in X'_j$). Then, for all $k \in [n] \setminus \{i, j\}$, we have $X'_k \Delta Q \in \mathcal{I}_k$, along with $\left( X'_i \Delta Q \right) + g_1 \in \mathcal{I}_i$ and $X'_j - g_t \in \mathcal{I}_j$.
\end{restatable}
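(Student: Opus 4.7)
This is a classical shortest-path augmentation result in matroid theory, so the plan is to reduce the independence claim, agent by agent, to a uniqueness-of-exchange-matching argument that the shortest-path condition on $Q$ is designed to supply. The easy boundary is agent $j$: since $g_t \in X'_j$ and $X'_j \in \mathcal{I}_j$, the hereditary property of $\mathcal{M}_j$ immediately yields $X'_j - g_t \in \mathcal{I}_j$, no path structure required.

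For a fixed $k \in [n] \setminus \{i, j\}$, I would first catalog how $Q$ meets $X'_k$. Set $S_k := \{s \in [t-1] : g_s \in X'_k\}$, $J_k := \{g_s : s \in S_k\}$, and $K_k := \{g_{s+1} : s \in S_k\}$. Because the bundles are pairwise disjoint, whenever $g_s \in X'_k$ the agent certifying the exchange-graph edge $(g_s, g_{s+1})$ must be $k$ itself; hence $g_{s+1} \notin X'_k$ and $X'_k - g_s + g_{s+1} \in \mathcal{I}_k$. Consequently $X'_k \, \Delta \, Q = (X'_k \setminus J_k) \cup K_k$ has the same cardinality as $X'_k$. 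The classical ``unique perfect matching'' exchange lemma (see, e.g., \cite[Cor.~39.12a]{schrijver2003combinatorial}) gives $(X'_k \setminus J_k) \cup K_k \in \mathcal{I}_k$ provided the bipartite exchangeability graph on $J_k \cup K_k$ has a unique perfect matching. The natural matching pairs $g_s$ with $g_{s+1}$. I would prove uniqueness by a shortcut argument: any alternative matching edge $g_s \leftrightarrow g_{s'+1}$ with $s' > s$ would itself be an exchange-graph edge $(g_s, g_{s'+1})$; splicing it into $Q$ in place of the subpath $g_s \to g_{s+1} \to \cdots \to g_{s'+1}$ produces a strictly shorter directed path from $F_i(X'_i)$ to $X'_j$, contradicting the minimality of $Q$.

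For agent $i$, the same intersection analysis applied from index $2$ onward (using $g_1 \notin X'_i$) yields $X'_i \, \Delta \, Q \in \mathcal{I}_i$. To upgrade this to $(X'_i \, \Delta \, Q) + g_1 \in \mathcal{I}_i$, I would argue by contradiction: if the augmented set contained a circuit, the circuit together with $g_1 \in F_i(X'_i)$ would force an exchange edge out of $g_1$ into some $g_{s+1}$ appearing on $Q$, giving a strictly shorter $F_i(X'_i)$-to-$X'_j$ path that starts at a vertex of $F_i(X'_i)$, reaches $g_{s+1}$ in one step, and then follows $Q$ to $g_t$, again contradicting minimality.

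\textbf{Main obstacle.} The delicate part is exactly this translation of the global minimality of $Q$ in $\mathcal{G}(\mathcal{X}')$ into local statements per agent, together with the endpoint handling for $i$ and $j$: one has to verify that each hypothesized ``chord'' is a genuine edge of the exchange graph (so that the shortcut is indeed a valid directed path in $\mathcal{G}(\mathcal{X}')$) and that the shortcut preserves the required endpoints in $F_i(X'_i)$ and $X'_j$. Once this bookkeeping is in place, the body of the lemma reduces cleanly to the standard matroid exchange lemma cited above.
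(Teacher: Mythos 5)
The paper does not prove this lemma --- it is stated as a known result with a citation to Schrijver's textbook --- so there is no internal argument to compare your proposal against. That said, your sketch follows the standard augmenting-path route and much of it is sound: disposing of agent $j$ by hereditariness, cataloging how $Q$ meets each $X'_k$, reducing $X'_k \Delta Q \in \mathcal{I}_k$ (for $k \neq i,j$) to the unique-exchange-matching lemma, and extracting uniqueness of the matching from the no-forward-shortcut property of a shortest path are all the usual moves and are correct.

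There is, however, a genuine gap in the endpoint step for agent $i$. You first establish $X'_i \Delta Q \in \mathcal{I}_i$ and then try to append $g_1$ by arguing that a circuit in $(X'_i \Delta Q) + g_1$ would ``force an exchange edge out of $g_1$.'' That inference is not valid: $g_1 \in F_i(X'_i)$ only says $g_1$ is addable to $X'_i$; it says nothing about which bundle (if any) contains $g_1$, and hence nothing about the out-edges of $g_1$ in $\mathcal{G}(\mathcal{X}')$ --- indeed $g_1$ may be unallocated, in which case it has no out-edges at all and the purported shortcut simply does not exist. Moreover the circuit you extract lives in $\mathcal{M}_i$, while an out-edge of $g_1$ would have to be certified by whichever matroid owns $g_1$, and there is no direct passage between the two. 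The correct fix is to apply the unique-matching lemma once with base independent set $N := X'_i + g_1$ and target $N - J_i + K_i$, rather than splitting the argument into ``first $X'_i \Delta Q$, then $+g_1$.'' Verifying that the diagonal edges persist relative to $N$ (i.e., that $N - g_s + g_{s+1} \in \mathcal{I}_i$) is not automatic: it uses rank submodularity together with the additional shortest-path consequence that $g_b \notin F_i(X'_i)$ for every $b \geq 2$ (otherwise the suffix $(g_b,\ldots,g_t)$ would be a strictly shorter $F_i(X'_i)$-to-$X'_j$ path). Forward chords are then excluded exactly as for the other agents. Your proposal omits these steps, so the agent-$i$ case is incomplete as written.
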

As per the above lemma, if we perform augmentation along a shortest path between $F_i(X'_i)$ and $X'_j$, we get a new non-wasteful (partial) allocation in which the valuation of agent $i$ increases by one and that of $j$ decreases by one; the valuations of all other agents remain unchanged. 

\section{Characterizing Truthfulness}
This section develops a characterization (under matroid-rank valuations) of mechanisms that are truthful and non-wasteful. Recall that in the case of matroid-rank valuations, by definition, non-wasteful mechanisms---for all input valuation profiles---output allocations $(A_1, \ldots, A_n)$ comprised of independent bundles, $A_i \in \mathcal{I}_i$, for all $i \in [n]$.  

As mentioned previously, from any truthful mechanism $f$, one can obtain a value-equivalent mechanism $f'$ that is both truthful and non-wasteful. That is, given truthfulness, one can assume non-wastefulness without loss of generality. We will establish a stronger result (Proposition \ref{proposition:nw-wlog}) showing that non-wastefulness can in fact be achieved with additional properties and, hence, in relevant contexts it can be assumed without loss of generality.

We will use the following notation. Let $v:2^{[m]} \mapsto \mathbb{R}_+$ be a valuation function and $X \subseteq [m]$ a subset of goods. Then, write $v^{X} ( \cdot)$ to denote the function obtained by restricting $v$ to the subset $X$, i.e., $v^{X} (S) \coloneqq v(S \cap X)$, for each $S \subseteq [m]$. One can verify that if $v$ is a matroid-rank function, then so is $v^{X}$, for any subset $X$. 

Also, for notional convenience we will write $v^{-g}$ for $v^{[m]\setminus \{g\}}$, i.e., $v^{-g}$ is the valuation obtained by removing good $g$ from consideration. Furthermore, for a valuation profile $\vb{v} = (v_1, v_2, \ldots, v_n)$ along with agent $i \in [n]$ and function $v'_i$, we write $\left(v'_i, v_{-i} \right)$ to denote the profile wherein the valuation of agent $i$ is $v'_i$ and the valuations of all the other agents remain unchanged. 

Our characterization asserts that any non-wasteful mechanism $f$ is truthful iff it is \emph{gradual} (see Definition \ref{definition_mono} below). Intuitively, this notion captures the idea that the output of the mechanism changes ``gradually'' under specific misreports: if an agent $i$ misreports by excluding one good from her valuation (i.e., reports $v^{-g}_i$ instead of $v_i$), then the number of goods assigned to her change by at most one. Also, if bundle $A_i$ is assigned to an agent $i$, then (mis)reporting a valuation that is restricted to a superset $X \supseteq A_i$ does not change the number of goods assigned to $i$. 

\begin{definition}[Gradual Mechanism]
\label{definition_mono}
A non-wasteful mechanism $f$ is said to be gradual iff for any agent $i \in [n]$, any valuation profile $\vb{v} = (v_1, \ldots, v_n)$, and allocation $(A_1, \ldots, A_n) = f(\vb{v})$ we have \\
($C_1$): \ $0 \leq  |A_i| - |B_i| \leq 1$, for any good $g \in [m]$ and corresponding allocation $(B_1, \ldots, B_n) =  f \left(v^{-g}_i, v_{-i} \right)$, and \\
($C_2$): \ $|A_i| = |B_i|$, for any superset $X \supseteq A_i$ and corresponding allocation $(B_1, \ldots, B_n) = f \left( v^{X}_i, v_{-i} \right)$.
\end{definition}
Note that this definition imposes conditions only on the sizes of bundles allocated by the mechanism and not on the valuations per se. Also, a repeated application of condition $C_1$ gives us \\
($C^*_1$): \ $0 \leq  |A_i| - |B_i| \leq |Y|$, for any set $Y \subseteq [m]$ and corresponding allocation $(B_1, \ldots, B_n) = f(v^{[m] \setminus Y}_i, v_{-i})$. 

The following theorem is the main result of this section. 
\begin{theorem}
\label{theorem_truthful_monotonic}
Under matroid-rank valuations, a non-wasteful mechanism $f$ is truthful iff it is gradual.
\end{theorem}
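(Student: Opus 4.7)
The plan is to prove the two directions of the iff separately. The forward direction (truthful $\Rightarrow$ gradual) is the easier direction and follows from two symmetric applications of truthfulness combined with non-wastefulness and the binary-marginals property. The reverse direction (gradual $\Rightarrow$ truthful) is the heart of the theorem, and I will prove it via a chain-of-restrictions argument that routes both the truthful and the manipulated profiles through a common pivot valuation.

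For the truthful $\Rightarrow$ gradual direction, fix a profile $\vb{v}$ with $(A_1, \ldots, A_n) = f(\vb{v})$. To verify $C_2$ for any $X \supseteq A_i$, let $(B_1, \ldots, B_n) = f(v_i^X, v_{-i})$. Truthfulness with true valuation $v_i$, combined with monotonicity and non-wastefulness, gives $|A_i| = v_i(A_i) \geq v_i(B_i) \geq v_i(B_i \cap X) = v_i^X(B_i) = |B_i|$; truthfulness in the opposite direction (true valuation $v_i^X$, noting that $v_i^X$ is again a matroid-rank function), together with $A_i \subseteq X$, gives $|B_i| = v_i^X(B_i) \geq v_i^X(A_i) = v_i(A_i) = |A_i|$, so $|A_i| = |B_i|$. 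The verification of $C_1$ is parallel: one direction of truthfulness immediately yields $|A_i| \geq |B_i|$, and the other, combined with the binary-marginals inequality $v_i(A_i \setminus \{g\}) \geq v_i(A_i) - 1$, yields $|B_i| \geq |A_i| - 1$.

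For the gradual $\Rightarrow$ truthful direction, fix true $v_i$ and misreport $v'_i$ and set $A_i = f_i(v_i, v_{-i})$, $B_i = f_i(v'_i, v_{-i})$; the goal is $v_i(A_i) \geq v_i(B_i)$. Let $I \subseteq B_i$ be a maximum-cardinality subset of $B_i$ that is independent in the matroid $\mathcal{M}_i$ underlying $v_i$, so that $|I| = v_i(B_i)$. Non-wastefulness of $B_i$ under $v'_i$ forces $B_i$, and hence also $I$, to be independent in $\mathcal{M}'_i$. The key observation is that $v_i^I$ and $(v'_i)^I$ then coincide: both equal the valuation $w$ defined by $w(S) = |S \cap I|$. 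Setting $E_i = f_i(w, v_{-i})$, I sandwich $|E_i|$ from two sides. From the $v_i$-side, $C_1^*$ along the chain $v_i \to v_i^I = w$ (deleting elements of $[m] \setminus I$ one at a time) gives $|E_i| \leq |A_i|$. From the $v'_i$-side, applying $C_2$ at profile $(v'_i, v_{-i})$ with $X = B_i$ preserves agent $i$'s bundle size at $|B_i|$, and then applying $C_1^*$ to delete the elements of $B_i \setminus I$ from $(v'_i)^{B_i}$ reaches the same valuation $w$ with $|E_i| \geq |B_i| - |B_i \setminus I| = |I|$. Chaining, $v_i(B_i) = |I| \leq |E_i| \leq |A_i| = v_i(A_i)$.

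The main obstacle is the reverse direction, and specifically the choice of the pivot valuation $w$. The non-obvious step is recognizing that taking $w = v_i^I = (v'_i)^I$ for an independent witness $I \subseteq B_i$ of $v_i(B_i)$ makes both transitions $v_i \to w$ and $v'_i \to w$ expressible using only the moves that gradual-ness controls: $C_2$ first shrinks the ground set on the $v'_i$-side from $[m]$ to $B_i$ without bundle-size loss (this requires $X \supseteq B_i$, which $X = B_i$ satisfies), after which $C_1$ governs the at-most-one-good-at-a-time erosion down to $I$ on both sides. Once the pivot is set up, the remaining argument is pure bookkeeping with $C_1^*$.
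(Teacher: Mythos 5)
Your proposal is correct and follows essentially the same route as the paper: the forward direction uses the same two-sided truthfulness arguments, and the reverse direction pivots through the common restriction $w = v_i^I = (v'_i)^I$ for an independent witness $I$ of $v_i(B_i)$ inside the manipulated bundle, applying $C_2$ and then $C_1^*$ exactly as the paper does (with your $I$, $w$, $E_i$ playing the roles of the paper's $X$, $v'_i$, $A'_i$). The only cosmetic difference is that you argue directly via the chain $|I| \le |E_i| \le |A_i|$ rather than by contradiction.
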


The proof of Theorem \ref{theorem_truthful_monotonic} directly follows Lemmas \ref{lemma_mono_if_truthful} and \ref{lemma_truthful_if_mono} which, respectively, establish the necessity and sufficiency of gradualness.

\begin{lemma}
\label{lemma_mono_if_truthful}
Under matroid-rank valuations, every non-wasteful and truthful mechanism $f$ is gradual. 
\end{lemma}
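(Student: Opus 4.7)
The plan is to verify the two conditions $C_1$ and $C_2$ separately, and in each case invoke truthfulness twice---once taking $v_i$ as the true valuation and the restricted valuation as the misreport, and once the other way around. A preliminary observation that will be used repeatedly: since any restriction of a matroid-rank function is again a matroid-rank function, $f$'s non-wastefulness applies at the deviated profiles and pins down the shape of the resulting bundle for agent $i$. Concretely, if $(B_1, \ldots, B_n) = f(v^{-g}_i, v_{-i})$, then $v^{-g}_i(B_i) = |B_i|$; rewriting $v^{-g}_i(B_i) = v_i(B_i \setminus \{g\}) \leq |B_i \setminus \{g\}| \leq |B_i|$ forces $g \notin B_i$ and $B_i \in \mathcal{I}_i$, so that $v_i(B_i) = |B_i|$. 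The analogous computation under the report $v^X_i$ with $X \supseteq A_i$ forces $B_i \subseteq X$ and $B_i \in \mathcal{I}_i$, so again $v_i(B_i) = |B_i|$, while $A_i \subseteq X$ makes the restriction transparent on $A_i$, giving $v^X_i(A_i) = v_i(A_i) = |A_i|$.

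For $C_1$, I would first establish $|A_i| \geq |B_i|$ by taking $v_i$ as the true valuation and $v^{-g}_i$ as the misreport: truthfulness yields $|A_i| = v_i(A_i) \geq v_i(B_i) = |B_i|$, where the last equality uses the preliminary observation. For the upper bound $|A_i| - |B_i| \leq 1$, I would swap roles, taking $v^{-g}_i$ as the truth and $v_i$ as the misreport: truthfulness gives $|B_i| = v^{-g}_i(B_i) \geq v^{-g}_i(A_i) = v_i(A_i \setminus \{g\})$, and the binary-marginals property of $v_i$ yields $v_i(A_i \setminus \{g\}) \geq v_i(A_i) - 1 = |A_i| - 1$, closing the bound. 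This is the only place where binary marginals actually enter the proof.

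Condition $C_2$ follows the same two-directional template applied to $v^X_i$ instead of $v^{-g}_i$, and is in fact cleaner because the containment $A_i \subseteq X$ eliminates any need for the binary-marginals slack: one direction of truthfulness gives $|A_i| = v_i(A_i) \geq v_i(B_i) = |B_i|$, and the other direction gives $|B_i| = v^X_i(B_i) \geq v^X_i(A_i) = |A_i|$, hence equality. I do not anticipate a real technical obstacle; the argument is a bookkeeping exercise that chains truthfulness with non-wastefulness at the deviated profile. The only delicate point is keeping track of which valuation is playing the role of ``truth'' in each invocation of truthfulness, and recognizing that it is non-wastefulness applied at the \emph{misreported} profile---not at the original---that certifies the identity $v_i(B_i) = |B_i|$ on which the bounds rely.
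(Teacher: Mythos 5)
Your proof is correct and follows essentially the same route as the paper: for each of $C_1$ and $C_2$, invoke truthfulness once in each direction (with the original and the restricted valuation alternately playing the role of the true report), and chain through non-wastefulness at both profiles, with binary marginals entering only for the upper bound in $C_1$. Your preliminary observation that non-wastefulness at the deviated profile already pins down $v_i(B_i) = |B_i|$ is a slight tightening of the paper's inequality $v_i(B_i) \geq v'_i(B_i)$, but the structure and the key steps are the same.
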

\begin{proof}
We will show that any given non-wasteful and truthful mechanism $f$ satisfies the two conditions in Definition \ref{definition_mono} and, hence, is gradual. 

To prove the first condition, ($C_1$), consider any agent $i \in [n]$, valuation profile $\vb{v} = (v_1, \ldots , v_n)$, and allocation $(A_1, \ldots, A_n) = f(\vb{v})$. For any good $g \in [m]$, define valuation $v'_i \coloneqq v^{-g}_i$ (i.e., $v'_i (S) = v_i(S - g)$ for all subsets $S$) and let profile $\vb{v}' = (v'_i, v_{-i})$ along with allocation $(B_1, \ldots, B_n) = f( \vb{v}')$. 

The truthfulness of mechanism $f$ ensures that, if agent $i$'s valuation is $v_i$, then she cannot gain by (mis)reporting it to be $v'_i$, i.e., $v_i(A_i) \geq v_i(B_i)$. Using this inequality and the non-wastefulness of $f$ we obtain
\begin{align*}
|A_i| & = v_i(A_i) \tag{$f$ is non-wasteful} \\
& \geq v_i(B_i) \tag{$f$ is truthful} \\
& \geq v'_i(B_i) \tag{$v'_i$ is a restriction of $v_i$} \\
& = |B_i| \tag{$f$ is non-wasteful}
\end{align*}
Therefore, we have one part of the condition ($C_1$): $|A_i| - |B_i| \geq 0$.
 
The truthfulness of $f$ also guards against the setting in which agent $i$'s valuation is $v'_i$ and she (mis)reports it as $v_i$, i.e., $v'_i(B_i) \geq v'_i(A_i)$. Furthermore, the definition of $v'_i$ gives us $v'_i (A_i) = v_i(A_i - g) \geq v_i(A_i) - 1$; recall that matroid-rank functions have binary marginals. Using these bounds we get the other part of ($C_1$): 
\begin{align*}
|B_i| & = v'_i(B_i) \tag{$f$ is non-wasteful} \\
& \geq v'_i(A_i) \tag{$f$ is truthful} \\
& \geq v_i(A_i) - 1 \\
& = |A_i| - 1 \tag{$f$ is non-wasteful}
\end{align*}

These observations imply ($C_1$): \ $0 \leq |A_i| - |B_i| \leq 1$. 

For the second condition ($C_2$), consider any set of goods $X \supseteq A_i$ and define valuation $v''_i \coloneqq v^X_i$ along with profile $\vb{v}'' \coloneqq (v''_i, v_{-i})$. Also, let allocation $(A''_1, \ldots, A''_n) = f(\vb{v}'')$. Truthfulness of $f$ implies $v_i(A_i) \geq v_i(A''_i)$ and $v''_i(A''_i) \geq v''_i(A_i)$. Hence, 
\begin{align}
|A''_i| & = v''_i(A''_i) \tag{$f$ is non-wasteful} \nonumber \\
& = v_i(A''_i \cap X) \tag{by definition of $v''_i$} \nonumber \\
& \leq v_i (A''_i) \tag{$v_i$ is monotonic} \nonumber \\
& \leq v_i(A_i) \tag{$f$ is truthful} \nonumber \\
& = |A_i| \label{ineq:primetwo}
\end{align}
The last equality follows from the non-wastefulness of $f$. Complementary to inequality (\ref{ineq:primetwo}) we have: 
\begin{align}
|A_i| & = v_i(A_i) \tag{$f$ is non-wasteful} \nonumber \\
& = v_i(A_i \cap X) \tag{since $X \supseteq A_i$} \nonumber \\
& = v''_i(A_i) \tag{by definition of $v''_i$} \nonumber \\
& \leq v''_i(A''_i) \tag{$f$ is truthful} \nonumber \\
& = |A''_i| \label{ineq:primeone}
\end{align} 
The last equality follows from the non-wastefulness of $f$. Inequalities (\ref{ineq:primetwo}) and (\ref{ineq:primeone}) establish condition ($C_2$): $|A_i| = |A''_i|$.
This completes the proof. 
\end{proof}

The following lemma provides the sufficiency aspect of Theorem \ref{theorem_truthful_monotonic}. Recall that, by definition, a gradual mechanism is non-wasteful. 

\begin{lemma}
\label{lemma_truthful_if_mono}
Under matroid-rank valuations, every gradual mechanism $f$ is truthful. 
\end{lemma}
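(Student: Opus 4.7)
The plan is to bound $v_i(A_i)$ and $v_i(A'_i)$ against a common quantity by passing both the true profile $\vb{v}$ and the misreport profile $(v'_i, v_{-i})$ through a carefully chosen intermediate report. Let $(A_1,\ldots,A_n) = f(\vb{v})$ and $(A'_1,\ldots,A'_n) = f(v'_i, v_{-i})$, set $r \coloneqq v_i(A'_i)$, and pick a maximum-cardinality subset $I \subseteq A'_i$ with $I \in \mathcal{I}_i$, so that $|I| = r$. The pivotal observation is that non-wastefulness of $f$ at $(v'_i, v_{-i})$ forces $A'_i$ to be independent under $v'_i$, and hence $v'_i(S) = |S|$ for every $S \subseteq A'_i$; since $v_i$ is likewise equal to cardinality on all subsets of $I$, the two restrictions $v^I_i$ and $(v'_i)^I$ coincide with the common ``free'' function $w_i(S) = |S \cap I|$. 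Let $C \coloneqq f(w_i, v_{-i})$.

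For the upper bound on $|C_i|$, I would invoke $C^*_1$ at the true profile $\vb{v}$ with $Y = [m] \setminus I$: since $v^{[m] \setminus Y}_i = v^I_i = w_i$, this yields $|A_i| \geq |C_i|$. For the lower bound on $|C_i|$, I would first normalize the misreport via $C_2$: taking the (trivial) superset $X = A'_i$ and letting $D \coloneqq f((v'_i)^{A'_i}, v_{-i})$, condition $C_2$ applied to $(v'_i, v_{-i})$ gives $|D_i| = |A'_i|$. Treating $((v'_i)^{A'_i}, v_{-i})$ as a new starting profile and invoking $C^*_1$ with $Y = A'_i \setminus I$, a short set-algebra check shows $((v'_i)^{A'_i})^{[m] \setminus Y} = (v'_i)^I = w_i$, so the resulting allocation is again $C$, and $C^*_1$ delivers $|D_i| - |C_i| \leq |A'_i \setminus I| = |A'_i| - r$. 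Combined with $|D_i| = |A'_i|$, this gives $|C_i| \geq r$.

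Chaining the two bounds with the non-wastefulness of $f$ at $\vb{v}$ produces $v_i(A_i) = |A_i| \geq |C_i| \geq r = v_i(A'_i)$, which is precisely the truthfulness inequality. The conceptual core is the choice of the bridge valuation $w_i$: one needs a set $I$ that is simultaneously independent under $v_i$ and $v'_i$ and on which the two valuations agree (both equalling cardinality), and the maximum $v_i$-independent subset of $A'_i$ supplies exactly this, thanks to non-wastefulness at $(v'_i, v_{-i})$. I expect the main obstacle to be keeping the nested restrictions straight---verifying that restricting $v'_i$ first to $A'_i$ and then removing the goods in $A'_i \setminus I$ yields the same function as restricting $v_i$ directly to $I$---but this reduces to the identity $(v^X)^Y = v^{X \cap Y}$ together with $A'_i \setminus (A'_i \setminus I) = I$.
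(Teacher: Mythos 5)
Your proof is correct and follows essentially the same route as the paper's: both arguments build the bridge valuation by restricting to a maximum $v_i$-independent subset of the misreported bundle (your $I$, the paper's $X$), observe that the restrictions of $v_i$ and the misreport to this set coincide, and then chain condition ($C_2$) with repeated applications of ($C_1$) to sandwich the bundle sizes. The only differences are presentational (a direct argument rather than a contradiction, and the explicit nested-restriction bookkeeping), so nothing of substance separates the two proofs.
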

\begin{proof}
We assume, towards a contradiction, that the gradual mechanism $f$ is not truthful. In particular, there exist an agent $i \in [n]$, valuation profile $\vb{v} = (v_1, v_2, \ldots , v_n)$, and   function (misreport) $\widehat{v}_i$ such that $v_i(A_i) < v_i(\widehat{A}_i)$ for allocations $(A_1, \ldots, A_n) = f (\vb{v})$ and $(\widehat{A}_1, \ldots, \widehat{A}_n) =  f(\widehat{v}_i, v_{-i})$. Write $\widehat{\vb{v}} \coloneqq (\widehat{v}_i, v_{-i})$. 
  
We will consider a third valuation $v'_i$ and show that it can be obtained as restrictions of $\widehat{v}_i$ as well as $v_i$. Towards this, let $X \subseteq \widehat{A}_i$ denote a subset that satisfies $v_i(X) = v_i(\widehat{A}_i) = |X|$; since $v_i$ is a matroid-rank function, such a subset $X$ is guaranteed to exist. In fact, $X$ is an independent set with respect to the matroid associated with rank function $v_i$. We define the valuation $v'_i$ by restricting $\widehat{v}_i$ to set $X$; specially, $v'_i \coloneqq \widehat{v}_i^{X}$. 

Recall that the non-wastefulness of $f$ ensures $\widehat{v}_i (\widehat{A}_i) = |\widehat{A}_i|$, i.e., $\widehat{A}_i$ is independent with respect to the matroid associated with $\widehat{v}_i$. Hence, $X \subseteq \widehat{A}_i$ is also an independent set for $\widehat{v}_i$. These observations imply $v_i(X) = |X| = \widehat{v}_i(X)$. Moreover, the independence of $X$ in both the concerned matroids implies $v_i(S \cap X) = |S \cap X| = \widehat{v}_i ( S \cap X)$ for all subsets $S \subseteq [m]$. Therefore, $v_i^X  = \widehat{v}^X_i$ and, hence, we can express $v'_i$ as a restriction of $v_i$ as well; in particular, $v'_i = \widehat{v}^X_i = v^X_i$. 

Write profile $\vb{v}' \coloneqq (v'_i, v_{-i})$ and allocation $(A'_1, A'_2, \ldots, A'_n) = f(\vb{v}')$. Note that we have three profiles $\vb{v} = (v_i, v_{-i})$, $\widehat{\vb{v}} = (\widehat{v}_i, v_{-i})$, and $\vb{v}' = (v'_i, v_{-i})$, and three respective allocations $(A_1,\ldots, A_n)$, $(\widehat{A}_1, \ldots, \widehat{A}_n)$, and $(A'_1, \ldots, A'_n)$.

We will apply the conditions from Definition \ref{definition_mono} to contradict the assumed inequality $v_i(A_i) < v_i(\widehat{A}_i)$. First, note that condition ($C_2$) ensures that agent $i$ receives the same number of goods under both the following profiles: $\widehat{\vb{v}} = (\widehat{v}_i, v_{-i})$ and $(\widehat{v}^{\widehat{A}_i}_i, v_{-i})$, i.e., in profile $(\widehat{v}^{\widehat{A}_i}_i, v_{-i})$, the number of goods allocated to agent $i$ is equal to $|\widehat{A}_i|$. Since $X \subseteq \widehat{A_i}$, one can view profile $\vb{v}' = (v'_i, v_{-i})$ as one obtained by restricting $\widehat{v}^{\widehat{A}_i}_i$ further to $v'_i = \widehat{v}^X_i$. Hence, condition ($C^*_1$)---equivalently a repeated application of ($C_1$)---applied between profiles $(\widehat{v}^{\widehat{A}_i}_i, v_{-i})$ and $\vb{v}' = (v'_i, v_{-i})$ gives us 
\begin{align}
|A'_i| \geq |\widehat{A}_i| - |\widehat{A}_i \setminus X| = |X| \label{ineq:cone}
\end{align}
   
Next, we use the non-wastefulness of $f$ to obtain the complementary bound $|A'_i| \leq |X|$: the bundle $A'_i$ is obtained by executing $f$ with agent $i$'s valuation set as $\widehat{v}^X_i$. Hence, non-wastefulness ensures that $A'_i \subseteq X$. Therefore, we have $|A'_i| \leq |X|$. This bound along with inequality (\ref{ineq:cone}) gives us 
\begin{align}
|A'_i| = |X| \label{ineq:atox}
\end{align}

Finally, recall that $v'_i = v^X_i$. Hence, applying condition ($C^*_1$) between $\vb{v}= (v_i, v_{-i})$ and $\vb{v}' = (v^X_i, v_{-i})$ we obtain 
\begin{align}
|A_i| \geq |A'_i| \underset{\text{via } (\ref{ineq:atox})}{=} |X| = v_i(\widehat{A}_i)   \label{ineq:finally}
\end{align}
The last inequality follows from the definition of $X$. Inequality (\ref{ineq:finally}) provides the desired contraction and completes the proof.  
\end{proof}

Our characterization holds for a class a functions which in fact encapsulates matroid-rank valuations. We define this function class next. 

\paragraph{Binary $\XOS$ functions.} A set function $v:2^{[m]} \mapsto \mathbb{R}_+$ is said to be binary $\XOS$ iff there exists a family of subsets $\mathcal{F} \subseteq 2^{[m]}$ such that $v(S) = \max_{F \in \mathcal{F}} |F \cap S|$, for all subsets $S \subseteq [m]$. That is, under a binary $\XOS$ function $v$, for any subset $S$, the value $v(S)$ is obtained by considering the largest (cardinality wise) subset of $S$ that is contained in the defining family $\mathcal{F}$. Indeed the rank function of a matroid $\mathcal{M} = ([m], \mathcal{I})$ is binary $\XOS$ with the family $\mathcal{F} = \mathcal{I}$.

For a binary $\XOS$ function $v(\cdot)$, define the collection of subsets $\mathcal{I}' \coloneqq \{ X \subseteq [m] \ : \ v(X) = |X| \}$ and note that $\mathcal{I}'$ satisfies the hereditary property. That is, for any pair of subsets $A \subseteq B \subseteq [m]$, if the set $B \in \mathcal{I}'$, then $A \in \mathcal{I}'$ as well. However, the other defining property of matroids (i.e., the augmentation property) may not be satisfied by the set family $\mathcal{I}'$. We note that the augmentation property of matroids is never invoked in the proofs of Lemma \ref{lemma_mono_if_truthful} and Lemma \ref{lemma_truthful_if_mono} (and, hence, for Theorem \ref{theorem_truthful_monotonic}). Hence, the proofs continue to hold true under the (weaker) assumption that the valuation functions $v_i$-s are binary $\XOS$. Therefore, our characterization also holds for the broader class of binary $\XOS$ functions.

\section{Impossibility Result}
\label{section:impossibility}
This section establishes a notable separation between $\EFone$ and $\MMS$ in the current mechanism-design context. We show that, under matroid-rank valuations, truthful mechanisms (satisfying some additional, desirable properties) do not exist for maximin fairness. By contrast, $\EFone$ admits such  truthful mechanisms; in fact, for $\EFone$ we have a stronger positive result guaranteeing group strategy-proofness (Section \ref{section:group_strategyproofness}).

We begin by defining the key concepts for our impossibility result. Let $\vb{v} = (v_1, v_2, \ldots, v_n)$ be any valuation profile and $\pi : [m] \mapsto [m]$ be a permutation of the set of goods. Note that, since $\pi$ is a bijection, for each good $g$, the inverse (pre-image) $\pi^{-1}(g)$ is unique. For each agent $i \in [n]$, we define function 
\begin{align*}
v^\pi_i (S) & \coloneqq v_i \left(  \left\{ \pi^{-1}(g) \right\}_{g \in S} \right) \quad \text{for all subsets $S \subseteq [m]$}.
\end{align*} 
That is, after reindexing the goods via $\pi$, one would obtain the value---with respect to the initial valuation $v_i$---of any subset of goods by applying $v^\pi_i$;  see, Figure \ref{figure1} for an illustration. 

Intuitively, the valuation profile $\vb{v}^\pi \coloneqq (v^\pi_1, \ldots, v^\pi_n)$ represents the same set of preferences as in profile $\vb{v}$, the only difference is that the goods have been reindexed. The following definition aims to capture the idea that indexing of goods should \emph{not} influence the values that the agents' receive. We will use $\pi(S)$ to denote the set $\left\{ \pi(g) : g \in S \right\}$ and $\pi^{-1}(S)$ for $\left\{ \pi^{-1}(g) : g \in S \right\}$.  

\begin{figure}[h]
\centerline{\includegraphics[scale=0.6]{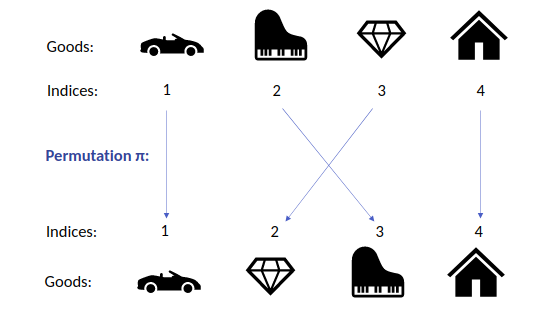}}
\caption{In this example permutation $\pi$ reindexes the goods. Then, the value $v^{\pi}_i(\{1,2\}) = v_i(\pi^{-1}(\{1,2\})) = v_i(\{1,3\})$, which is equal to the value of car and diamond.} 
\label{figure1}
\end{figure}

\begin{definition}[Index-Oblivious Mechanism]
A mechanism $f$ is said to be {index-oblivious} iff, given any valuation profile $\vb{v} = (v_1,\ldots, v_n)$ and any permutation $\pi : [m] \mapsto [m]$, for the (partial) allocations $(A_1, \ldots, \allowbreak A_n) = f(\vb{v})$ and $(A'_1, \ldots, A'_n) = f(\vb{v}^\pi)$ we have $v_i(A_i) = v^\pi_i(A'_i)$ for all agents $i \in [n]$.
\end{definition}

Note that if a mechanism is \emph{not} index-oblivious, then certain ways of indexing the goods could be advantageous for some agents and disadvantageous for others. However, indexing of goods should ideally be irrelevant. This observation supports index-obliviousness as a reasonable robustness criterion for mechanisms.     

Theorem \ref{theorem:mnwio} (proved in Appendix \ref{section:efone-index-ob}) shows for $\EFone$ there exists an index-oblivious mechanism; in particular,  we show that the Prioritized Egalitarian ($\PE$) mechanism (detailed in Section \ref{section:group_strategyproofness}) of Babaioff et al.~\cite{babaioff2020fair} is index-oblivious. It is known that, under matroid-rank valuations, $\PE$ outputs $\EFone$ allocations and it is truthful as well as Pareto efficient \cite{babaioff2020fair}.  
The following theorem proves that an analogous result is not possible for $\MMS$. 

\begin{theorem}
\label{theorem-mms-negative}
Under matroid-rank valuations, there does not exist a mechanism that is truthful, index-oblivious, Pareto efficient, and maximin fair. 
\end{theorem}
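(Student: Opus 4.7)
The proof will proceed by contradiction. Assume $f$ is a mechanism that is simultaneously truthful, index-oblivious, Pareto efficient, and maximin fair for matroid-rank valuations. By Proposition~\ref{proposition:nw-wlog} I may assume without loss of generality that $f$ is non-wasteful, and then by Theorem~\ref{theorem_truthful_monotonic} this forces $f$ to be \emph{gradual}, i.e., to satisfy both conditions ($C_1$) and ($C_2$) of Definition~\ref{definition_mono}.

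My approach is to exhibit an explicit matroid-rank profile $\vb{v}$ on a small ground set (likely two or three agents on a handful of goods) whose four required properties collide. I will construct $\vb{v}$ so that (a) the maximin shares $\mu_i$ are strictly positive and tight, so $\MMS$ plus $\PO$ plus non-wastefulness pin down the sizes of the bundles $A_i = f(\vb{v})_i$ and reduce the admissible allocations to a small, enumerable list; (b) there is a permutation $\pi$ of the goods under which $\vb{v}^\pi$ is genuinely different from $\vb{v}$ but whose $\PO+\MMS+$non-wasteful allocations are the $\pi$-images of those of $\vb{v}$, so that index-obliviousness provides a sharp link between $f(\vb{v})$ and $f(\vb{v}^\pi)$; and (c) some agent $i$ admits a restricted valuation (either $v_i^{X}$ with $X \supseteq A_i$ for an application of ($C_2$), or $v_i^{-g}$ for applications of ($C_1$) and ($C_1^\ast$)) whose PO+MMS+non-wasteful structure requires a bundle size for agent $i$ that is incompatible with the one forced by gradualness.

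Given the instance, I will combine the three pieces as follows. First, enumerate the PO+MMS+non-wasteful allocations of $\vb{v}$ and of the misreported profile $\vb{v}'$, making sizes explicit. Second, invoke index-obliviousness with $\pi$ to constrain $f(\vb{v}^\pi)$, which in turn restricts $f(\vb{v})$ to a few specific outputs. Third, apply gradualness to the misreport: ($C_2$) applied with $X = A_i$ gives $|f(\vb{v}')_i| = |A_i|$, while repeated applications of ($C_1$) pin $|f(\vb{v}')_i|$ to $\{|A_i|-|Y|, \ldots, |A_i|\}$ when the misreport removes a set $Y$ of goods. Finally, confront this size constraint with the admissible allocations on $\vb{v}'$: either no allocation satisfies both simultaneously, or every admissible allocation assigns agent $i$ a bundle with strictly larger value under her true $v_i$ than $v_i(A_i)$, contradicting truthfulness.

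The hardest step, and the main obstacle, is the construction itself. The maximin share is defined via a discrete partition problem, and for matroid-rank functions its value interacts subtly with the matroid's combinatorial structure; I need valuations whose MMS thresholds are both tight and sensitive to restriction by a single good (or to a single subset). I expect the instance to use either rank-two uniform matroids on overlapping supports or partition matroids whose blocks are shuffled by $\pi$, since these structures admit tight MMS thresholds that nevertheless shift under a single-element modification. Equally critical is choosing $\pi$ so that $\vb{v}^\pi$ is structurally new rather than a relabeling of a symmetric instance, since only then does index-obliviousness convert into a nontrivial cross-profile constraint. Once the right instance and permutation are in place, the combination of gradualness, index-obliviousness, and PO+MMS+non-wastefulness will yield the contradiction essentially mechanically.
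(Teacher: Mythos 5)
Your outline correctly identifies all the ingredients the paper uses -- reduction to a non-wasteful mechanism via Proposition~\ref{proposition:nw-wlog}, the gradualness characterization of Theorem~\ref{theorem_truthful_monotonic} (conditions ($C_1$), ($C_1^*$), ($C_2$)), index-obliviousness applied through a transposition of goods, and MMS plus non-wastefulness pinning down bundle sizes -- and the overall architecture matches the paper's. But the proposal stops exactly where the proof begins: no instance is constructed, and you acknowledge that the construction is ``the hardest step.'' For an impossibility theorem of this type the counterexample \emph{is} the proof, so what you have is a plausible plan, not an argument. For reference, the paper uses $n=2$ agents and $m=6$ goods with $G=\{g_1,g_2\}$, $v_1(S)=|S\cap G|$ and $v_2(S)=\min\{1,|S\cap G|\}+\min\{2,|S\cap([m]\setminus G)|\}$, forcing $A_1=\{a\}$ and $A_2=\{b,c,d\}$ with $a,b\in G$.

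There is also a structural idea in the final step that your plan does not capture and that you would likely need. You propose to confront a size constraint from gradualness with the admissible allocations on a single misreported profile. The paper's contradiction instead has a branching quantifier structure: the misreport $w_1^*(S)=|S\cap\{a,b,c,d\}|$ raises agent $1$'s maximin share to $2$, so under the misreport she must receive $a$ together with \emph{some} $y\in\{b,c,d\}$, but the mechanism chooses which one. To conclude, one must show that for \emph{every} choice of $y$, the corresponding true profile $\vb{w}^y$ (where agent $1$'s valuation is $|S\cap\{a,y\}|$) yields her only $\{a\}$ -- and this is precisely where the two hypotheses divide the labor: gradualness (condition ($C_2$) applied to agent $2$) handles the profile $\vb{w}^b$, and index-obliviousness under the transpositions $b\leftrightarrow c$ and $b\leftrightarrow d$ transfers that conclusion to $\vb{w}^c$ and $\vb{w}^d$. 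Without designing the instance so that a single fixed misreport dominates a whole family of true valuations simultaneously, the ``mechanical'' confrontation you describe will not close, because truthfulness only constrains one true valuation at a time and the mechanism is free to route its ambiguity away from any single profile you test.
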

\begin{proof}
We assume, towards a contradiction, that there exists a mechanism $f$ that is truthful, index-oblivious, and it outputs Pareto efficient and $\MMS$ allocations. We can also assume that $f$ is non-wasteful (Proposition \ref{proposition:nw-wlog} in Appendix \ref{appendix:section-impossibility}). 

To derive the desired contradiction, we will construct an instance wherein an agent can always benefit by misreporting her valuation. In particular, consider a setting with $n=2$ agents and $m=6$ goods, say $\{g_1, g_2, \ldots, g_6\}$. Fix subset $G \coloneqq \{ g_1, g_2 \}$ and consider the following valuations for the two agents, respectively: $v_1(S) \coloneqq |S \cap G|$ and $v_2(S) \coloneqq \min\{1,|S\cap G|\} + \min\{2,|S\cap ([m]\setminus G)|\}$, for all subsets $S \subseteq [m]$. These valuations are rank functions of (partition) matroids. Note that under valuations $v_1$ and $v_2$, the maximin shares of the two agents are $\mu_1 = 1 $ and $\mu_2 = 3$, respectively.  

Furthermore, write allocation $(A_1, A_2) = f(v_1, v_2)$. Given the (assumed) properties of $f$, the (partial) allocation $(A_1, A_2)$ is $\MMS$. Hence, $v_1(A_1) \geq \mu_1 = 1$ and $v_2(A_2) \geq \mu_2 = 3$. By definition, the valuation $v_2$ is at most $3$, and to achieve the above-mentioned $\MMS$ bound agent $2$ must receive at least one good from $G=\{g_1, g_2\}$. The $\MMS$ guarantee for agent $1$ implies that she also receives at least one good from $G$. These observations, along with the fact that $(A_1, A_2)$ is a non-wasteful allocation ($|A_1| = v_1(A_1)$ and $|A_2| = v_2(A_2)$), ensures that  agents $1$ and $2$ receive a bundle of size $1$ and $3$, respectively.

Write $a \in G=\{g_1, g_2\}$ to denote the good assigned to agent $1$, and $b \in G$ be the other good allocated to agent $2$, i.e., $A_1 = \{a \}$ and  $A_2 = \{b, c, d\}$, for two goods $c, d \in [m]\setminus G$. Based on the three goods assigned to agent $2$, we will define three valuations profiles $\vb{w}^b$, $\vb{w}^c$, $\vb{w}^d$, and show that agent $1$ would benefit by misreporting from one at least of them. This will contradict the assumption that $f$ is truthful and, hence, establish the theorem. In all the these three profiles, we set the second agent's valuation as $w_2(S) \coloneqq |S \cap \{b, c, d\}|$, for all subsets $S \subseteq [m]$.  For each $x \in \{b, c, d \}$, let function $w^x_1 (S) \coloneqq |S \cap \{a, x \}|$ (for all subsets $S$) and write profile $\vb{w}^x = (w^x_1, w_2)$. 

A key technical step in the proof is to show that under all the three profiles $\vb{w}^x$ the two agents continue to receive the same bundles $A_1$ and $A_2$, respectively, i.e., for each $x \in \{b, c, d\}$ we have $(A_1, A_2) = f(\vb{w}^x)$. The following three claims establish this property for the profiles $\vb{w}^b$, $\vb{w}^c$, and $\vb{w}^d$, respectively. 

\begin{claim}
\label{claim:forb}
$f(\vb{w}^b) = (A_1, A_2)$.
\end{claim}
\begin{proof}
Recall that $(A_1, A_2) = f(v_1, v_2)$, with $A_1 = \{a \}$ and $A_2=\{b, c, d\}$, and $G=\{a, b\}$. Also, note that restricting $v_2$ to the subset $A_2$ in fact gives us function $w_2$ (i.e., $w_2 = v^{A_2}_2$). Since mechanism $f$ is truthful and non-wasteful, it is gradual as well (Theorem \ref{theorem_truthful_monotonic}), i.e., $f$ satisfies condition ($C_2$) in Definition \ref{definition_mono}. In particular, applying  condition ($C_2$) on agent $2$ and restricting $v_2$ to the subset $A_2$ (equivalently, considering function $w_2$) we get that agent $2$ receives three goods under allocation $(B_1, B_2) \coloneqq f(v_1, w_2)$. Hence, we have $|B_2| = |A_2| = 3$.  

Furthermore, note that $v_1 = w^b_1$; specifically, $v_1(S) = |S \cap \{a, b\}| = w^b_1(S)$, for all subsets $S$. Therefore, we have $(B_1, B_2) = f(w^b_1, w_2) = f(\vb{w}^b)$. The fact that allocation $(B_1, B_2)$ is non-wasteful implies $w_2(B_2) = |B_2|$. Given that the cardinality of $B_2$ is three and one can achieve this value under $w_2$ only if all three goods $\{b, c, d\}$ are assigned to agent $2$, we obtain $B_2 = \{b, c, d\} = A_2$.   

Additionally, bundle $B_1$ must satisfy the $\MMS$ requirement for agent $1$ (under valuation $w^b_1$), i.e., $w^b_1( B_1) \geq 1$. The definition of  $w^b_1$ and the fact that good $b$ is assigned to agent $2$ implies $B_1 = \{ a \} = A_1$.  Therefore, $(B_1, B_2) = (A_1, A_2)$ and the stated claim  holds, $(A_1, A_2) = f(\vb{w}^b)$. 
\end{proof}

\begin{claim} 
\label{claim:forc}
$f(\vb{w}^c) = (A_1, A_2)$. 
\end{claim}
\begin{proof}
To prove this claim we will invoke the index-obliviousness of mechanism $f$. Towards this, consider valuation profile $\vb{w}^b = (w^b_1, w_2)$ and permutation $\pi : [m] \mapsto [m]$ that maps $\pi(b) = c$ and $\pi(c) = b$, and $\pi(g) = g$ for each good $g \in [m] \setminus \{b,c\}$. Write valuations $\widehat{w}_1 = \left( w^b_1 \right)^\pi$ and $\widehat{w}_2 = w^\pi_2$; recall that, by definition, these valuations preserve the values under the reindexing by $\pi$. Since $f$ is index-oblivious, the allocation $(\widehat{A}_1, \widehat{A}_2) \coloneqq f(\widehat{w}_1, \widehat{w}_2)$ satisfies $\widehat{w}_1(\widehat{A}_1) = w^b_1(A_1) = 1$ and $\widehat{w}_2(\widehat{A}_2) = w_2(A_2) = 3$; here, we use Claim \ref{claim:forb}, $(A_1, A_2)=f(\vb{w}^b)$.

The definition of $\widehat{w}_2$ gives us $\widehat{w}_2 ( \widehat{A}_2) = |\widehat{A}_2 \cap \{c, b, d\}|$. Since $\widehat{w}_2(\widehat{A}_2) = 3$, it has to be the case that $\widehat{A}_2 = \{b, c, d \} = A_2$. Also, via the definition of $\widehat{w}_1$ we obtain $\widehat{w}_1 ( \widehat{A}_1) = |\widehat{A}_1 \cap \{ a, c \}|$. We also have $\widehat{w}_1(\widehat{A}_1) = 1$. These observations and the fact that good $c$ is assigned in $\widehat{A}_2$ leads to $\widehat{A}_1 = \{a\} = A_2$. Therefore, $f(\widehat{w}_1, \widehat{w}_2) = (A_1, A_2)$. 

To establish this current claim we will show that the valuation profile $\left( \widehat{w}_1, \widehat{w}_2 \right)$ is in fact the same as $\vb{w}^c = (w^c_1, w_2)$ and, hence, $f(\vb{w}^c) = f(\widehat{w}_1, \widehat{w}_2) = (A_1, A_2)$. The equivalence of valuations $\widehat{w}_1$ and $w_1^c$ follows by noting that, for every $S \subseteq [m]$, we have 
\begin{align*}
\widehat{w}_1(S) & = w_1^b(\pi^{-1}(S)) \tag{by definition of $\widehat{w}_1$}\\
& = |\pi^{-1}(S) \cap \{a,b\}| \tag{by definition of $w_1^b$}\\
& = |S \cap \{a,c\}| \tag{by definition of $\pi$}\\
& = w_1^c(S) \tag{by definition of $w^c_1$}
\end{align*}
Similarly, for each subset $S \subseteq [m]$,
\begin{align*}
\widehat{w}_2(S) & = w_2(\pi^{-1}(S)) \tag{by definition of $\widehat{w}_2$}\\
& = |\pi^{-1}(S) \cap \{b,c,d\}| \tag{by definition of $w_2$}\\
& = |S \cap \{c, b, d\}| \tag{by definition of $\pi$}\\
& = w_2(S) \tag{by definition of $w_2$}
\end{align*}
Therefore, the profile $(\widehat{w}_1, \widehat{w}_2)$ is same as $\vb{w}^c$, and the claim follows, $f(\vb{w}^c) = (A_1, A_2)$. 
\end{proof}

\begin{claim}
$f(\vb{w}^d) = (A_1, A_2)$. 
\end{claim}
\begin{proof}
This follow from an argument similar to one used for Claim \ref{claim:forc}. Here, we use the index-obliviousness of $f$ with  valuation profile $\vb{w}^b$ and permutation $\pi' : [m] \mapsto [m]$ defined as $\pi'(b) = d$, $\pi'(d) = b$, and $\pi'(g) = g$ for each $g \in [m] \setminus \{b,d\}$.
\end{proof}

We now complete the proof by showing that agent $1$ would benefit by misreporting in at least one of these three profiles. In particular, define the valuation (misreport) $w^*_1(S) \coloneqq |S \cap \{a, b, c, d \}|$, for all subsets $S \subseteq [m]$. Under the valuation $w^*_1$, the maximin share of agent $1$ is equal to two. Hence, given valuation profile $(w^*_1, w_2)$, mechanism $f$ (to maintain maximin fairness) must assign agent $1$ a bundle of value $2$. That is, for allocation $(A^*_1, A^*_2) = f(w^*_1, w_2)$, we have $w^*_1(A^*_1) \geq 2$. Furthermore, since the returned (partial) allocation must be Pareto optimal and good $a$ is only desired by agent $1$, good $a$ must be allocated to the first agent, $a \in A^*_1$. These observation imply that $A^*_1$ additionally contains at least one good from the set $\{b,c,d\}$; write $y \in A^*_1 \cap \{b, c, d\}$ to denote that good. Now, consider the case wherein agent $1$'s true valuation is $w^y_1$ (and that of agent $2$ is $w_2$). Since $y \in \{b, c, d \}$, the above-mentioned claims imply that $(A_1, A_2) = f(w^y_1, w_2)$, i.e., reporting $w^y_1$ truthfully agent $1$ receives a bundle of size $|A_1| = 1$. However, misreporting her valuation to be $w^*_1$, agent $1$ receives both the goods $a$ and $y$ (since $a, y \in A^*_1$). Therefore, $w^y_1 ( A^*_1) > w^y_1(A_1)$, and this contradicts the truthfulness of $f$. The theorem stands proved. 
\end{proof}

We note that, the above negative result can be strengthened by replacing the Pareto efficiency requirement with a weaker economic efficiency notion, which we refer to as \emph{local-efficiency}. We say that a (partial) allocation $\Alloc = (A_1, A_2, \ldots, A_n)$ is \emph{locally-efficient} iff for each good $g \in [m] \setminus \cup_{i=1}^n A_i$ and agent $i \in [n]$, we have $v_i(A_i + g) = v_i(A_i)$. Essentially, in a locally-efficient allocation we cannot increase the utility of any agent by simply allocating an unallocated good to her. Indeed, local-efficiency is a strictly weaker notion than Pareto efficiency.

In the proof of Theorem \ref{theorem-mms-negative}, Pareto efficiency of $f$ is only used (once) at the end of the proof (for arriving at the final contradiction), and there we can easily use local-efficiency instead of the stronger Pareto efficiency notion. Consequently, Theorem \ref{theorem-mms-negative} can be strengthened to obtain the following corollary.

\begin{corollary}
Under matroid-rank valuations, there does not exist a mechanism that is truthful, index-oblivious, locally-efficient, and maximin fair.
\end{corollary}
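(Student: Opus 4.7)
The plan is to show the corollary by a careful rereading of the proof of Theorem \ref{theorem-mms-negative}, identifying the unique step that invokes Pareto efficiency and verifying that local-efficiency suffices there. Essentially no new construction is needed; rather, the bulk of the argument is already local-efficiency-free.

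First I would verify that the reduction to a non-wasteful mechanism (Proposition \ref{proposition:nw-wlog}) still applies under the weaker hypothesis of local-efficiency instead of Pareto efficiency. The conversion takes $(A_1,\ldots,A_n)=f(\vb{v})$ to a largest-cardinality independent subset $A'_i\subseteq A_i$; since $v_i(A'_i)=v_i(A_i)$, the $\MMS$ guarantee is preserved, and index-obliviousness is preserved because the reduction commutes with reindexing of goods. For local-efficiency: if a good $g$ is unallocated after the reduction, either $g$ was already unallocated in $f(\vb{v})$ (and monotonicity plus submodularity together with $v_i(A'_i)=v_i(A_i)$ gives $v_i(A'_i+g)=v_i(A'_i)$) or $g$ was dropped from some $A_j$, in which case $A'_j+g$ is dependent so $v_j(A'_j+g)=v_j(A'_j)$, while for $i\neq j$ the same submodularity argument works. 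Hence we may assume $f$ is non-wasteful without loss of generality.

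Next I would reuse the construction from Theorem \ref{theorem-mms-negative} verbatim: the same two-agent, six-good instance with valuations $v_1,v_2$ and the derived profiles $\vb{w}^b,\vb{w}^c,\vb{w}^d$ and $(w_1^*,w_2)$. Every step up to and including Claims \ref{claim:forb} and \ref{claim:forc} (and the analogous claim for $d$) relies only on truthfulness (equivalently gradualness, via Theorem \ref{theorem_truthful_monotonic}), non-wastefulness, index-obliviousness, and maximin fairness. None of these claims uses Pareto efficiency, so they port over to the local-efficiency setting with no change. The conclusion of these claims is that $(A_1,A_2)=f(\vb{w}^x)$ for each $x\in\{b,c,d\}$.

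The single place where Pareto efficiency appeared in the original proof is the assertion that in $(A_1^*,A_2^*)=f(w_1^*,w_2)$, good $a$ must be allocated to agent $1$. Here I would substitute a local-efficiency argument. Assume for contradiction that $a\notin A_1^*$. If $a$ is also not in $A_2^*$, then $a$ is unallocated and $w_1^*(A_1^*+a)=|(A_1^*+a)\cap\{a,b,c,d\}|=w_1^*(A_1^*)+1$, contradicting local-efficiency of $f$ on profile $(w_1^*,w_2)$. If instead $a\in A_2^*$, then by non-wastefulness $|A_2^*|=w_2(A_2^*)=|A_2^*\cap\{b,c,d\}|$, forcing $A_2^*\subseteq\{b,c,d\}$ and contradicting $a\in A_2^*$. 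Hence $a\in A_1^*$, and combined with the $\MMS$ requirement $w_1^*(A_1^*)\geq 2$ we again obtain some $y\in A_1^*\cap\{b,c,d\}$, after which the misreport contradiction with profile $\vb{w}^y$ goes through exactly as before.

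The main obstacle I anticipate is the non-wastefulness reduction under local-efficiency: showing that dropping redundant goods to enforce $v_i(A_i)=|A_i|$ does not destroy the local-efficiency property. This is the only step that is genuinely new (the original Proposition \ref{proposition:nw-wlog} covered Pareto efficiency), but the submodularity-plus-monotonicity argument sketched above handles it cleanly. Everything else is a line-by-line transfer of the Theorem \ref{theorem-mms-negative} proof, with the final Pareto-efficiency step replaced by the two-case local-efficiency argument.
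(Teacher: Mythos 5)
Your overall route is the same as the paper's: the proof of Theorem~\ref{theorem-mms-negative} invokes Pareto efficiency exactly once, to conclude $a \in A^*_1$, and your two-case substitution (either $a$ is unallocated, contradicting local-efficiency, or $a \in A^*_2$, contradicting non-wastefulness of agent $2$'s bundle under $w_2$) is precisely the intended replacement. You are also right that the one genuinely new point is whether the reduction to a non-wasteful mechanism (Proposition~\ref{proposition:nw-wlog}) still goes through, since local-efficiency---unlike Pareto efficiency and $\MMS$---is not determined by the value tuple alone. That is exactly where your argument has a gap.

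The cross-agent case of your preservation claim fails. Suppose $g$ is dropped from $A_j$ when passing to the largest independent subset $B_j$, and consider an agent $i \neq j$. Local-efficiency of $f$ tells you nothing about $v_i(A_i + g)$, because it only constrains goods that $f$ leaves \emph{unallocated}, and $g$ was allocated (to $j$). Moreover, submodularity runs in the wrong direction: it gives $v_i(B_i + g) - v_i(B_i) \geq v_i(A_i + g) - v_i(A_i)$, so a positive marginal for $g$ at $A_i$ forces a positive marginal at $B_i \subseteq A_i$. Hence the non-wasteful version $f'$ need not be locally efficient. This matters in the instance at hand: nothing you have shown prevents a locally-efficient $f$ from assigning good $a$ to agent $2$ at the profile $(w^*_1, w_2)$ (where $a$ contributes nothing to $w_2$); the reduction then leaves $a$ unallocated while agent $1$ has positive marginal for it, so $f'$ violates local-efficiency at exactly the profile where you invoke it, and neither of your two cases produces a contradiction with the actual hypothesis (local-efficiency of $f$). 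In that scenario $A^*_1 \subseteq \{b,c,d\}$ is possible, whence $w^y_1(A^*_1) = 1 = w^y_1(A_1)$ for every $y \in A^*_1 \cap \{b,c,d\}$ and the final truthfulness contradiction evaporates. To close the gap you must either rule out $a \in \tilde{A}_2$ for the possibly wasteful output $(\tilde{A}_1, \tilde{A}_2) = f(w^*_1, w_2)$ by some other means, or restrict the corollary to non-wasteful mechanisms. (The paper's own one-line justification elides this same subtlety; you have correctly surfaced it, but the submodularity patch does not resolve it.)
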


\section{Group Strategy-Proofness for $\EFone$} 
\label{section:group_strategyproofness}

This section establishes group strategy-proofness for a mechanism of Babaioff et al.~\cite{babaioff2020fair}, called the prioritized egalitarian ($\PE$) mechanism. This mechanism relies on finding \emph{Lorenz dominating} allocations; we define this notion next. 

 For any allocation $\mathcal{A}$ and valuation profile $\vb{v} = (v_1, \ldots, v_n)$, write $\vb{s}_\mathcal{A}=(s_1, s_2, \ldots, s_n)$ to denote the vector wherein all the components of  $(v_1(A_1), v_2(A_2) \ldots, v_n(A_n))$ appear in non-decreasing order, i.e., $s_1$ denotes the lowest valuation across the agents, $s_2$ is the second lowest, and so on. We say that allocation $\mathcal{A}$ Lorenz dominates another allocation $\mathcal{A}'$ iff, for every index $j \in [n]$, the sum of the first $j$ components of $\vb{s}_{\Alloc}$ is at least as large as the sum of the first $j$ components of $\vb{s}_{\Alloc'}$, i.e., iff $\vb{s}_{\Alloc}$ majorizes $\vb{s}_{\Alloc'}$ (see \cite{marshall1994inequalities} for a detailed treatment of majorization). An (partial) allocation $\Alloc$ is said to be \emph{Lorenz dominating} iff $\Alloc$ Lorenz dominates all other allocations. Notably, under matroid-rank valuations, Lorenz dominating allocations are guaranteed to exist~\cite{babaioff2020fair}.

\floatname{algorithm}{Mechanism}
\begin{algorithm}[H]
  \caption{Prioritized Egalitarian ($\PE$) \cite{babaioff2020fair}} \label{algo:pmms}
\textbf{Input:} Valuation profile $(v_1, v_2, \ldots, v_n)$ comprised of the reported valuations of all agents.\\
\noindent
\textbf{Output:} A non-wasteful Lorenz dominating allocation $\mathcal{A} = (A_1, A_2, \ldots, A_n)$.
  \begin{algorithmic}[1]
		\STATE If for any agent $i \in [n]$, the reported valuation, $v_i$, is not a matroid rank function, then reject the report and replace it with a function that is identically $0$ (which is a matroid-rank function).
		\STATE For the resulting profile $(v_1, v_2, \ldots, v_n)$, compute a non-wasteful Lorenz dominating allocation $\Alloc = (A_1, A_2, \ldots, A_n)$, breaking ties in favor of agents with lower indices. Equivalently, among all (non-wasteful) Lorenz dominating allocations, select one, $(A_1, A_2, \ldots, A_n)$, that lexicographically maximizes the vector $(v_1(A_1), v_2(A_2), \ldots, v_n(A_n))$ (i.e., the Lorenz dominating allocation maximizes $v_1(A_1)$, and then subject to that it maximizes $v_2(A_2)$, and so on).
		\STATE \textbf{return } $\Alloc = (A_1, A_2, \ldots, A_n)$
	\end{algorithmic}
\end{algorithm}
Babaioff et al.~\cite{babaioff2020fair} show that $\PE$ can be executed in polynomial time, in particular when the matroid-rank functions admit a succinct representation.

We will use a basic result from matroid theory called the strong basis exchange lemma.

\begin{lemma}[Strong Basis Exchange]
\label{lemma:basisX}
Let $M = (S, \mathcal{I})$ be a matroid and $A,B \in \mathcal{I}$ be two independent sets such that $|A| = |B|$. Then, for each element $a \in A \setminus B$, there exists $b \in B \setminus A$, such that both $A - a + b \in \mathcal{I}$ and $B - b + a \in \mathcal{I}$.
\end{lemma}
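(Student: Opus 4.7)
My plan is a case analysis on whether $B + a$ is independent, with the harder case handled via a fundamental-circuit argument. Throughout, I would use only two ingredients from matroid theory: the augmentation axiom (stated in the paper's definition) and the monotonicity/submodularity of the rank function $r$, already recalled in the paper's matroid preliminaries.

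If $B + a \in \mathcal{I}$, then since $|B + a| > |A|$, the augmentation axiom applied to the independent sets $A$ and $B + a$ produces some $b \in (B + a) \setminus A$ with $A + b \in \mathcal{I}$; as $a \in A$, this forces $b \in B \setminus A$, and the hereditary property then gives both $A - a + b \in \mathcal{I}$ (a subset of $A + b$) and $B - b + a \in \mathcal{I}$ (a subset of $B + a$).

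The interesting case is $B + a \notin \mathcal{I}$. Here I would invoke the unique circuit $C \subseteq B + a$, uniqueness following from the independence of $B$ by the strong circuit elimination axiom (any two circuits in $B + a$ must both contain $a$, and eliminating $a$ would yield a circuit inside $B$). Then $a \in C$ and $C \setminus \{a\} \subseteq B$, and a standard calculation yields $B - b + a \in \mathcal{I}$ for every $b \in C \setminus \{a\}$. It therefore suffices to find some $b \in C \setminus \{a\}$ that additionally satisfies $b \in B \setminus A$ and $A - a + b \in \mathcal{I}$. I would argue by contradiction: if no such $b$ exists, then every $b \in C \setminus \{a\}$ either lies in $A - a$ (hence contributes nothing new to $A-a$) or satisfies $r((A - a) + b) = r(A - a) = |A| - 1$. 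Adjoining these elements one at a time and using monotonicity and submodularity of $r$, this forces $r((A - a) \cup (C \setminus \{a\})) = r(A - a) = |A| - 1$. Since $C$ is a circuit, $r(C) = r(C - a) = |C| - 1$, so $a$ does not increase the rank of $C \setminus \{a\}$; combining via monotonicity, $a$ does not increase the rank of the larger set $(A - a) \cup (C \setminus \{a\})$ either. But this larger set contains $A$, so $r(A) \leq r(A - a) = |A| - 1$, contradicting $A \in \mathcal{I}$.

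The main obstacle is really just this last rank-manipulation step, where one must verify that the property ``$b$ does not raise the rank of $A - a$'' is preserved under taking finite unions and then combined with the circuit property $a \in \mathrm{span}(C \setminus \{a\})$; this is a direct, if slightly delicate, consequence of $r$ being monotone and submodular, and can equivalently be phrased using idempotence of the closure operator. Everything else reduces to unpacking the definitions and axioms already present in the paper.
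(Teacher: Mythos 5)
The paper states this lemma without proof, citing it as a standard result from matroid theory; there is no paper proof to compare against. Your blind argument is, however, correct and complete. The easy case ($B+a$ independent) is handled cleanly by one application of augmentation followed by the hereditary property. The hard case via the fundamental circuit $C(a,B)$ and a rank-submodularity contradiction is also sound: you correctly observe that $B-b+a$ is independent for every $b\in C\setminus\{a\}$ (the unique circuit of $B+a$ is destroyed), and the contradiction that $r(A)\le |A|-1$ is reached exactly as you sketch. One small terminological slip: the step ``$a$ does not increase the rank of $C\setminus\{a\}$, hence does not increase the rank of the superset $(A-a)\cup(C\setminus\{a\})$'' is not monotonicity of the rank function but submodularity (equivalently, monotonicity of the closure operator). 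You acknowledge this dependence on submodularity in the surrounding text, so the mathematics is fine; just the phrase ``combining via monotonicity'' should read ``by submodularity'' (or ``by monotonicity of the span''). Also, worth noting for economy: the case $C\setminus\{a\}\subseteq A-a$ cannot actually arise, since it would put the circuit $C$ inside the independent set $A$; your argument handles it anyway, so no harm done.
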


We will next prove a supporting lemma, which essentially shows that an exchange matching between two independent sets can be extended in a consistent manner. 
\begin{figure}[h]
\centerline{\includegraphics[scale=0.6]{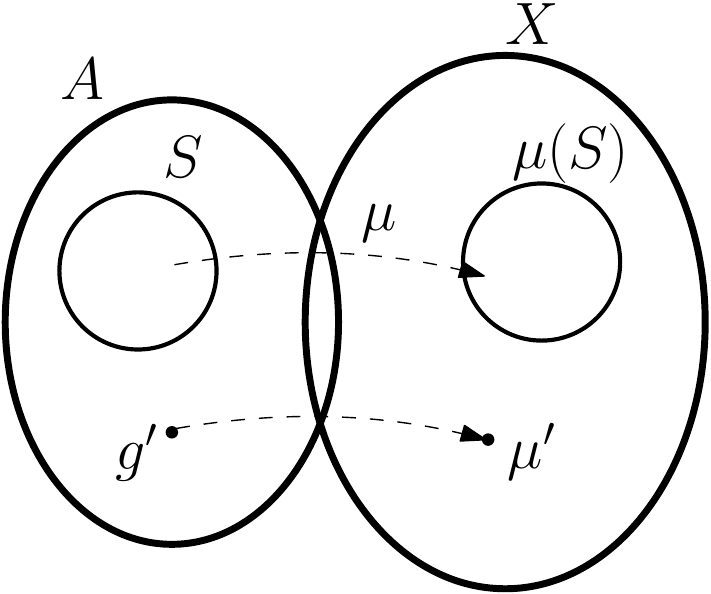}}
\caption{An illustration of Lemma \ref{lemma1}: element-wise swaps (i.e., replacing $\mu(g)$ by $g$) maintain independence in $X$. Subset swap (i.e., replacing $S$ by $\mu(S)$) maintains independence in $A$.} 
\label{figure2}
\end{figure}

\begin{lemma}
\label{lemma1}
Let $\Mat = ([m], \mathcal{I})$ be a matroid with independent sets $A,X \in \mathcal{I}$ such that $|X| \geq |A|$. Additionally, assume that for a subset of goods $S \subseteq A \setminus X$, there exists a one-to-one map $\mu:S \mapsto X \setminus A$ with the properties that (i) $X - \mu(g) + g  \in \mathcal{I}$ for each $g \in S$, and (ii) $A - S + \mu(S) \in \mathcal{I}$. 
Then, for each good $g' \in A \setminus (S + X)$, there exists a good $\mu' \in X \setminus (\mu(S) + A)$ that satisfies (i) $X - \mu' +  g' \in \mathcal{I}$ and (ii) $A - (S+g')+ (\mu(S)+\mu') \in \mathcal{I}$.
\end{lemma}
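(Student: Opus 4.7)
The plan is to pivot on the auxiliary independent set $B := A - S + \mu(S)$, which lies in $\mathcal{I}$ by hypothesis (ii), and to reduce the problem to an application of the Strong Basis Exchange Lemma (Lemma \ref{lemma:basisX}) on $B$ (suitably extended) and $X$. The first observation I would record is that $g' \in B$: since $g' \in A \setminus (S+X)$, we have $g' \in A - S$, and hence $g' \in B$; moreover $g' \notin X$. A short set-chase then gives $B \cap X = \mu(S) \cup (A \cap X)$, using the facts $\mu(S) \subseteq X$ and $S \cap X = \emptyset$ (because $S \subseteq A \setminus X$). Equivalently, $X \setminus B = X \setminus (\mu(S) \cup A)$, which is precisely the target set $X \setminus (\mu(S) + A)$ in the conclusion.

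Next I would equalize sizes. If $|B| < |X|$, I repeatedly invoke the matroid augmentation property on the independent sets $B$ and $X$ to obtain a subset $Y \subseteq X \setminus B$ with $|Y| = |X| - |B|$ and $B \cup Y \in \mathcal{I}$; if $|B| = |X|$ I just take $Y = \emptyset$. Note that $B$ and $X$ are independent and $|B| \le |A| \le |X|$, so this is valid. The element $g'$ continues to lie in $(B \cup Y) \setminus X$.

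Now I apply the Strong Basis Exchange Lemma to the equal-cardinality independent sets $B \cup Y$ and $X$ at the element $g' \in (B \cup Y) \setminus X$. This produces $\mu' \in X \setminus (B \cup Y)$ satisfying both $X - \mu' + g' \in \mathcal{I}$ (which is conclusion (i)) and $(B \cup Y) - g' + \mu' \in \mathcal{I}$. Since $\mu' \in X \setminus (B \cup Y) \subseteq X \setminus B = X \setminus (\mu(S) + A)$, the element $\mu'$ lies in the correct target set. Finally, for conclusion (ii), I use the hereditary property: $B - g' + \mu' \subseteq (B \cup Y) - g' + \mu'$, so $B - g' + \mu' \in \mathcal{I}$, which unfolds to $A - (S + g') + (\mu(S) + \mu') \in \mathcal{I}$, as required.

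The main conceptual step is guessing the right auxiliary set $B = A - S + \mu(S)$ to which Strong Basis Exchange should be applied; after that, the only mild obstacle is handling the possibility $|B| < |X|$, which the matroid augmentation property resolves cleanly. The verification that the $\mu'$ produced by Strong Basis Exchange avoids both $\mu(S)$ and $A$ is not a separate argument but just the computation $X \setminus B = X \setminus (\mu(S) \cup A)$ made above.
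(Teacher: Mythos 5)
Your proposal is correct and follows essentially the same route as the paper's proof: both pivot on the auxiliary independent set $A - S + \mu(S)$, extend it by augmentation to match $|X|$, apply Strong Basis Exchange, and finish with the hereditary property. The set identity $X \setminus B = X \setminus (\mu(S) \cup A)$ that you compute up front is exactly the membership verification the paper performs at the end.
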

\begin{proof}
Write $A' \coloneqq A - S + \mu(S) \in \mathcal{I}$ and note that $|A'| = |A| \leq |X|$. The augmentation property of matroids implies that there exists a subset $X' \subseteq X \setminus A'$ such that $X'+ A' \in \mathcal{I}$ and $|X'+A'| = |X|$. 
Hence, we can apply strong basis exchange (Lemma \ref{lemma:basisX}) on independent sets $X$ and $X'+ A'$, to obtain that for each good $g' \in (X'+A')\setminus X$ there exists a good $\mu' \in X \setminus (A'+X')$ with the property that both the sets $A' + X' - g' + \mu'$ and $X - \mu' + g'$ are independent; the independence of the latter set proves requirement (i) of the lemma. Also, given that $A' + X' - g' + \mu' \in \mathcal{I}$ and $A' = A - S + \mu(S)$, we get $A + X' - (S+g')+ (\mu(S)+\mu') \in \mathcal{I}$. Hence, the following subset is also independent $A - (S+g')+ (\mu(S)+\mu') \in \mathcal{I}$, proving requirement (ii) of the lemma. 
We will complete the proof by showing that the good $g'$  belongs to $A \setminus (S + X)$ and $\mu'$ belongs to $X \setminus (\mu(S) + A)$. By construction, the good $g' \in (A'+X')\setminus X$, with $A' = A - S + \mu(S)$ and the subsets $\mu(S), X' \subseteq X$. These observations imply $g' \in A \setminus (S + X)$. Furthermore, good $\mu' \in X \setminus (A'+X')$ and $X \cap S = \emptyset$, hence $\mu' \in X \setminus (A + \mu(S) + X') \subset X \setminus (A + \mu(S))$. The lemma stands proved.
\end{proof}

Using essentially the same arguments as in the proof of Lemma \ref{lemma1}, one can establish the following symmetric version of the result. 

\begin{corollary}
\label{corollary1}
Let $\Mat = ([m], \mathcal{I})$ be a matroid with independent sets $A,X \in \mathcal{I}$ such that $|X| < |A|$. Also, let $Y \subseteq A \setminus X$ be a subset of goods that extends $X$ into an independent set of cardinality equal to $A$ (i.e., $X + Y \in \mathcal{I}$ and $|X + Y| = |A|$). Additionally, assume that for a subset of goods $S \subseteq A \setminus (X+Y)$, there exists a one-to-one map $\mu \ : \ S \mapsto X \setminus A$ with the properties that (i) $X - \mu(g) + g  \in \mathcal{I}$ for each $g \in S$, and (ii) $A - S + \mu(S) \in \mathcal{I}$. 
Then, for each good $g' \in A \setminus (S + X + Y)$, there exists a good $\mu' \in X \setminus (\mu(S) + A)$ such that (i) $X - \mu' +  g' \in \mathcal{I}$ and (ii) $A - (S+g')+ (\mu(S)+\mu') \in \mathcal{I}$.
\end{corollary}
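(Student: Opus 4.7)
The plan is to adapt the proof of Lemma \ref{lemma1} by letting the given extension $Y$ play the role that the augmentation-produced set $X'$ played there. As in Lemma \ref{lemma1}, I would start by defining $A' \coloneqq A - S + \mu(S)$, which is independent by the hypothesis on $\mu$ and has cardinality $|A'| = |A|$. Crucially, the assumed equality $|X + Y| = |A|$ then gives $|A'| = |X + Y|$, so Lemma \ref{lemma:basisX} (strong basis exchange) can be applied directly to the two independent sets $A'$ and $X + Y$, with no additional augmentation step needed. This is exactly where the setup of the corollary differs from that of the lemma: there, $A'$ was the smaller set and had to be augmented using $X$; here, the roles are reversed and $Y$ supplies the needed extension of $X$ up to the size of $A'$.

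With that setup, the remaining steps proceed as follows. First, identify $A' \setminus (X + Y)$ with $A \setminus (S + X + Y)$, using $\mu(S) \subseteq X$ together with $g' \notin S$; this places the target good $g'$ inside $A' \setminus (X + Y)$. Second, apply strong basis exchange to $A'$ and $X + Y$ to obtain some $\mu' \in (X + Y) \setminus A'$ with both $A' - g' + \mu' \in \mathcal{I}$ and $(X + Y) - \mu' + g' \in \mathcal{I}$. Third, verify that this $\mu'$ actually lies in $X \setminus (\mu(S) + A)$. Fourth, derive conclusion (i) by applying the hereditary property to $(X + Y) - \mu' + g' \in \mathcal{I}$ to obtain $X - \mu' + g' \in \mathcal{I}$, and derive conclusion (ii) by rewriting $A' - g' + \mu' = A - (S + g') + (\mu(S) + \mu')$, which is well-defined because $g' \notin S$ and $g' \notin \mu(S)$.

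The one place where care is needed, and the main point of departure from Lemma \ref{lemma1}, is showing $Y \subseteq A'$, since this is exactly what forces $(X + Y) \setminus A' \subseteq X$ and hence $\mu' \in X$. This inclusion follows because $Y \subseteq A$ by hypothesis and $Y \cap S = \emptyset$ (the latter from $S \subseteq A \setminus (X + Y)$), so $Y \subseteq A - S \subseteq A'$. Once $Y \subseteq A'$ is in hand, $\mu' \notin A'$ combined with $A - S \subseteq A'$, $\mu(S) \subseteq A'$, and $S \cap X = \emptyset$ yields $\mu' \notin A \cup \mu(S)$, completing the containment $\mu' \in X \setminus (\mu(S) + A)$. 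The rest of the verification is essentially bookkeeping mirroring Lemma \ref{lemma1}, and it is precisely the slightly stronger hypothesis $S \subseteq A \setminus (X + Y)$ (rather than $S \subseteq A \setminus X$) that makes the containment $Y \subseteq A'$ hold and thereby makes the argument go through.
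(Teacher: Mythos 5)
Your proposal is correct and matches the approach the paper intends: the paper gives no explicit proof of Corollary~\ref{corollary1} (it just says the argument mirrors Lemma~\ref{lemma1}), and you have correctly filled in the symmetric version, replacing the augmentation step of Lemma~\ref{lemma1} (which builds $X'$ so that $|X' + A'| = |X|$) with the given extension $Y$ of $X$ up to $|A| = |A'|$, applying strong basis exchange to $A'$ and $X+Y$, and observing that the hereditary step now lands on conclusion (i) rather than (ii). Your checks that $g' \in A' \setminus (X+Y)$, that $Y \subseteq A'$ (via $Y \subseteq A$ and $Y \cap S = \emptyset$), and that consequently $\mu' \in X \setminus (\mu(S)+A)$ are all accurate and complete.
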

For matroid $\Mat = ([m], \mathcal{I})$ and independent set $X \in \mathcal{I}$, write $F(X) \coloneqq \{ g \in [m] \setminus X : X + g \in \mathcal{I} \}$ and note that in the previous corollary, we have $Y \subseteq F(X)$.

Using the above-mentioned results for matroids, we will now prove the key technical lemma of this section. Here, we use the notation mentioned in Section \ref{section:matroid-prelims}; in particular, for any agent $i \in [n]$ and independent set $A_i \in \mathcal{I}_i$, we write $F_i(A_i) \coloneqq \{ g \in [m] \setminus A_i : A_i + g \in \mathcal{I}_i \}$. Also, given any two allocations $\Alloc=(A_1, \ldots, A_n)$ and $\mathcal{X}=(X_1, \ldots, X_n)$, define the subsets of agents $L( \mathcal{X}, \Alloc) \coloneqq \{ i \in [n] \ : \ |X_i| < |A_i| \}$ and $H(\mathcal{X}, \Alloc) \coloneqq \{ i \in [n] \ : \ |X_i| > |A_i| \}$.

\begin{lemma}
\label{lemma:revpath}
Let $\mathcal{X} = (X_1, \ldots , X_n)$ be a non-wasteful allocation and $\Alloc = (A_1, \ldots , A_n)$ be a Pareto-efficient non-wasteful allocation such that there exists an agent $h \in H(\mathcal{X}, \Alloc)$. Then, there exists a simple directed path $P = (g_k, g_{k-1}, \ldots, g_2, g_1)$ in the exchange graph $\mathcal{G}(\mathcal{X})$ that satisfies the following two properties
\begin{enumerate}
\item For the path $P$, the source vertex $g_k \in A_\ell \cap F_\ell(X_\ell)$, for some agent $\ell \in L(\mathcal{X}, \Alloc)$ and its sink vertex, $g_1 \in X_h$ (with $h \in H(\mathcal{X}, \Alloc)$).
\item For each agent $i \in [n] \setminus \{h\}$, the subsets $A'_i \coloneqq A_i \Delta \{g_{j+1},g_j : g_j \in A_i \} \in \mathcal{I}_i$ and $A'_h \coloneqq A_h \Delta \{g_{j+1},g_j : g_j \in A_h \} + g_1 \in \mathcal{I}_h$.\footnote{In contrast to the augmentation $\Delta$ considered in Lemma \ref{lemma:path-augmentation}, here we swap along edges that end in $A_i$-s. Also, note the indexing of vertices along path $P$.}
\end{enumerate}
\end{lemma}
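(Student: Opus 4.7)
The plan is to split the proof into three stages: first establish that the source set is non-empty, then show via a Pareto-efficiency contradiction that some directed path from the source to $X_h$ must exist in $\mathcal{G}(\mathcal{X})$, and finally take a \emph{shortest} such path and verify that the reverse augmentation keeps every bundle independent by inducting along the path with Lemma~\ref{lemma1} and Corollary~\ref{corollary1}.

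\textbf{Stage 1 (source set).} First I would show $L(\mathcal{X},\mathcal{A})\neq\emptyset$: if $|X_i|\geq|A_i|$ for every $i$, non-wastefulness gives $v_i(X_i)\geq v_i(A_i)$ for all $i$ with strict inequality at $h$, so $\mathcal{X}$ Pareto-dominates $\mathcal{A}$, contradicting Pareto efficiency. For each $\ell\in L(\mathcal{X},\mathcal{A})$, the matroid augmentation property applied to $X_\ell,A_\ell\in\mathcal{I}_\ell$ produces $g\in A_\ell\setminus X_\ell$ with $X_\ell+g\in\mathcal{I}_\ell$, so the source set $S\coloneqq\bigcup_{\ell\in L(\mathcal{X},\mathcal{A})}\bigl(A_\ell\cap F_\ell(X_\ell)\bigr)$ is non-empty.

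\textbf{Stage 2 (path exists).} Let $U$ be the set forward-reachable from $S$ in $\mathcal{G}(\mathcal{X})$ and $V\coloneqq[m]\setminus U$; forward-closedness of $U$ makes $V$ backward-closed. Suppose for contradiction that $U\cap X_h=\emptyset$, i.e., $X_h\subseteq V$. For any agent $i$ and any $g'\in V\setminus X_i$ with $X_i+g'\notin\mathcal{I}_i$, every element $y$ of the fundamental circuit $C(X_i,g')-g'$ must lie in $V$ (else backward-closedness of $V$ is violated by the edge $(y,g')\in\mathcal{G}(\mathcal{X})$), so $g'\in\mathrm{cl}_i(X_i\cap V)$; a short case analysis on whether $X_i\cap U=\emptyset$ yields $|A_i\cap V|\leq|X_i\cap V|$ for every $i$. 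Define $\tilde{A}_i$ to be a maximum independent subset of $(X_i\cap V)\cup(A_i\cap U)$ in $\mathcal{M}_i$. These candidate sets are pairwise disjoint across $i$ (since $\mathcal{X}$ and $\mathcal{A}$ are partitions and $U\cap V=\emptyset$), so $\tilde{\mathcal{A}}$ is a valid allocation; a rank calculation—starting from the independent set $X_i\cap V$ and extending by elements of $A_i\cap U$ outside $\mathrm{cl}_i(X_i\cap V)$—gives $|\tilde{A}_i|\geq|A_i|$ for every $i$, with strict inequality at $h$ since $X_h\subseteq V$ and $|X_h|>|A_h|$. Thus $\tilde{\mathcal{A}}$ Pareto-dominates $\mathcal{A}$, a contradiction; hence some directed path from $S$ to $X_h$ exists in $\mathcal{G}(\mathcal{X})$.

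\textbf{Stage 3 (shortest-path augmentation).} Fix a shortest such path $P=(g_k,\ldots,g_1)$ with $g_k\in A_\ell\cap F_\ell(X_\ell)$ for some $\ell\in L(\mathcal{X},\mathcal{A})$ and $g_1\in X_h$. I would prove Property~2 by sweeping along $P$ and applying Lemma~\ref{lemma1} and Corollary~\ref{corollary1} inductively, agent-by-agent: for each $i$, the edges $(g_{j+1},g_j)$ with $g_j\in A_i$ specify a partial one-to-one map $\mu_i$, and the minimality of $P$ precludes the chord edges in $\mathcal{G}(\mathcal{X})$ that would otherwise violate the hypotheses of Lemma~\ref{lemma1}, allowing $\mu_i$ to be extended pair-by-pair while certifying $A_i'\in\mathcal{I}_i$. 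For the sink agent $h$, the absorption of the extra good $g_1\in X_h$ is handled by Corollary~\ref{corollary1} (whose $|X|<|A|$ regime matches $|A_h|<|X_h|$), yielding $A_h'\in\mathcal{I}_h$. The principal difficulty is precisely this transfer: shortest-path minimality is a combinatorial property of $\mathcal{G}(\mathcal{X})$ while the augmentation acts on $\mathcal{A}$'s bundles, so the "no shortcut" condition must be translated bundle-by-bundle into the matroid-level hypotheses required to iterate Lemma~\ref{lemma1}.
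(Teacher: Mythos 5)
Your Stage 1 is fine, and Stage 2, while unnecessary for the paper's argument, is in the right spirit. The real issue is Stage 3, which contains a genuine gap — and you flag it yourself as the ``principal difficulty'' without resolving it. The gap is not merely technical; it is the crux of the lemma.

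Here is why ``fix a shortest path in $\mathcal{G}(\mathcal{X})$, then verify Property~2'' does not work. The edges of a path $P$ in $\mathcal{G}(\mathcal{X})$ are certified only by the $\mathcal{X}$-bundles: an edge $(g_{j+1},g_j)$ means that for \emph{some} agent $a$ we have $g_{j+1}\in X_a$, $g_j\notin X_a$, and $X_a-g_{j+1}+g_j\in\mathcal{I}_a$. If $g_j\in A_i$, nothing forces $a=i$, and in particular nothing forces $g_{j+1}\in X_i$. Consequently the ``partial one-to-one map $\mu_i$'' you wish to feed into Lemma~\ref{lemma1} need not even land in $X_i\setminus A_i$, so the hypotheses of Lemma~\ref{lemma1} cannot be checked, let alone satisfied, for an arbitrary shortest path. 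Shortest-path minimality rules out chords in $\mathcal{G}(\mathcal{X})$, which is information about the $\mathcal{X}$-matroids; Property~2 is a statement about the $\mathcal{A}$-matroids. There is no translation between the two unless the path is built to have one.

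The paper avoids this entirely by never proving existence and then verifying. It constructs $P$ one vertex at a time starting from $g_1\in X_h\cap F_h(A_h)$, maintaining Property~2 as an inductive invariant. At each step the current source $g_t$ lies in some $A_{i_{t+1}}$ (otherwise $\mathcal{A}$ fails to be Pareto optimal), and Lemma~\ref{lemma1} or Corollary~\ref{corollary1} is invoked with $A=A_{i_{t+1}}$, $X=X_{i_{t+1}}$ and the partial matching already accumulated for agent $i_{t+1}$. This single application \emph{simultaneously} produces a good $g_{t+1}\in X_{i_{t+1}}\setminus A_{i_{t+1}}$, certifies the new edge $(g_{t+1},g_t)\in\mathcal{G}(\mathcal{X})$ via $X_{i_{t+1}}-g_{t+1}+g_t\in\mathcal{I}_{i_{t+1}}$, and preserves the exchange property for $A_{i_{t+1}}$, which is exactly Property~2. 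That the returned $\mu'$ avoids the goods already on the path gives termination (with a separate case when $i_{t+1}=h$ to avoid closing a cycle through $g_1$). The resulting path is simple and satisfies both Properties~1 and~2 by construction; it is not required to be shortest, and indeed its key structural feature --- that $g_j\in A_i$ forces $g_{j+1}\in X_i$ --- is exactly what an arbitrary shortest path lacks. Your proposal would need to either prove that some shortest path happens to have this feature (far from obvious) or abandon the ``fix then verify'' strategy in favor of the paper's interleaved construction.
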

Lemma \ref{lemma:revpath} essentially implies that both the given allocations $\mathcal{X}$ and $\mathcal{A}$ admit augmentations such that 
\begin{itemize}
\item Starting from $\mathcal{X}$, we obtain a new (non-wasteful) allocation wherein agent $\ell$'s value increases by one and that of $h$ decreases by one; all other agents continue to receive the same value (equivalently, bundle size); see Lemma \ref{lemma:path-augmentation}. 
\item Complementarily, starting from $\mathcal{A}$ we obtain a new (non-wasteful) allocation in which agent $\ell$'s value decreases by one, that of $h$ increases by one, and all other agents continue to receive the same value. 
\end{itemize}
See Figure \ref{figure3} for an illustration of the augmentations described above. The existence of such two new (non-wasteful) allocations will be crucially used in the proof of Theorem \ref{theorem:gsp}. 
\begin{figure}[h]
\centerline{\includegraphics[scale=0.5]{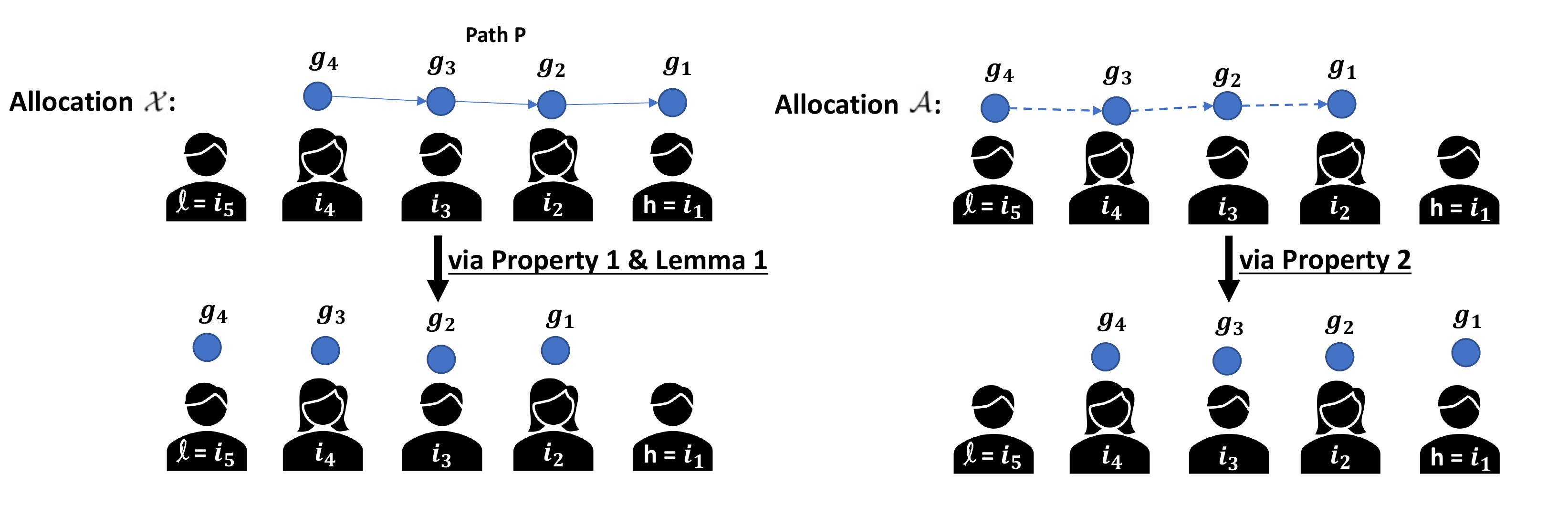}}
\caption{This illustration depicts how the two properties satisfied by path $P$ (in Lemma \ref{lemma:revpath}) are used to obtain two new allocations. For the ease of presentation, we assume that $P$ is a shortest path (between $F_\ell(X_\ell)$ and $X_h$ in $\mathcal{G}(\mathcal{X})$); Lemma \ref{lemma:path-augmentation} can be applied using any shortest path. Also, for simplicity, only the goods in $P$ are shown here.} 
\label{figure3}
\end{figure}

\begin{proof}(of Lemma \ref{lemma:revpath})
We will construct path $P$ inductively, by selecting an appropriate sink vertex $g_1$ and then adding one vertex at a time (in reverse) until we arrive at a source vertex $g_k$ such that the resulting path satisfies both conditions $1$ and $2$ (as listed in the lemma statement). At each inductive step, in order to find the next vertex, we will invoke either Lemma \ref{lemma1} or its symmetric variant, Corollary \ref{corollary1}. Furthermore, in each inductive step, a new vertex (not present in the current, partially-constructed path) is used to extend the path. This ensures that the constructed path is simple (vertex disjoint) and that the process will eventually terminate, since the number of vertices (goods) are finite. We will now describe this inductive process in detail.\\
\noindent
\emph{\bf Base case:} Start by initializing the sink of path $P$, vertex $g_1$, to be any good in $X_h \cap F_h(A_h)$; the existence of such a good follows from the augmentation property of matroids, since $|X_h| > |A_h|$ (recall that $h \in H(\mathcal{X},\Alloc)$) and $X_h, A_h \in \mathcal{I}_h$ (allocations $\mathcal{X}$ and $\Alloc$ are non-wasteful). 
Next, we initialize $i_1 \coloneqq h$ and assume that the good $g_1 \in A_{i_2}$ for some $i_2 \in [n]$. This assumption is valid because good $g_1 \in F_h(A_h)$ and as a result, if the good $g_1$ is unallocated in $\Alloc$, then $g_1$ could be allocated to agent $i_1$, thereby contradicting the Pareto-optimality of $\Alloc$. Additionally, the fact that good $g_1 \in A_{i_2}$ and $g_1 \in F_{i_1}(A_{i_1})$ (since $i_1 = h$) implies that the agents $i_2$ and $i_1$ are different, i.e., $i_2 \neq i_1$.\\
Now, suppose that the agent $i_2 \in L(\mathcal{X}, \Alloc)$ and the good $g_1 \in F_{i_2}(X_{i_2})$. In this case, with $\ell = i_2$, the path $P = (g_1)$ would satisfy condition $1$ of the lemma; since $g_1 \in A_\ell \cap F_\ell(X_\ell)$ and $g_1 \in X_h$. Additionally, condition $2$ will also be satisfied because $A'_h = A_h \Delta \{g_{j+1},g_j : g_j \in A_h \} + g_1 = A_h + g_1 \in \mathcal{I}_h$ ($g_1 \in F_h(A_h)$) and for each agent $i \in [n] \setminus \{h\}$, the subsets $A'_i = A_i \Delta \{g_{j+1},g_j : g_j \in A_i \} = A_i \in \mathcal{I}_i$ ($\Alloc$ is non-wasteful). Thus, if the assumption that the agent $i_2 \in L(\mathcal{X}, \Alloc)$ and $g_1 \in F_{i_2}(X_{i_2})$ is true, then the required path $P$ exists. Otherwise, we will show that we can always find a new vertex to extend the path $P$ while maintaining condition $2$. Towards establishing this, we consider two exhaustive cases -\\

\noindent
\emph{Case {\rm I}:} Agent $i_2 \notin L(\mathcal{X},\Alloc)$, i.e., $|X_{i_2}| \geq |A_{i_2}|$. Recall that, $g_1 \in A_{i_2} \setminus X_{i_2}$ (follows from the fact that $g_1 \in A_{i_2} \cap X_{i_1}$ and $i_1 \neq i_2$). Since $g_1 \in A_{i_2} \setminus X_{i_2}$, we can invoke Lemma \ref{lemma1} (with $S = \emptyset$, $A = A_{i_2}$, and $X = X_{i_2}$) to infer the existence of a good $g_2 \coloneqq \mu' \in X_{i_2} \setminus A_{i_2}$ such that $(a)$ $X_{i_2} - g_2 + g_1 \in \mathcal{I}_{i_2}$ and $(b)$ $A_{i_2} - g_1 + g_2 \in \mathcal{I}_{i_2}$. Note that, $(a)$ and the fact that $g_2 \in X_{i_2}$ implies that the edge $(g_2, g_1)$ lies in the graph $\mathcal{G}(\mathcal{X})$. 

We can also show that the path $P = (g_2, g_1)$ satisfies condition $2$ of the lemma statement. Since $g_1 \in A_{i_2}$, we know that $A'_{i_2} = A_{i_2} \Delta \{g_{j+1},g_j : g_j \in A_{i_2} \} = A_{i_2} - g_1 + g_2 \in \mathcal{I}_{i_2}$; here, the final containment follows from $(b)$. Furthermore, $A'_h = A_h \Delta \{g_{j+1},g_j : g_j \in A_h \} + g_1 = A_h + g_1 \in \mathcal{I}_h$, the last step follows from $g_1 \in F_h(A_h)$; for other agents $i \in [n] \setminus \{h, i_2\}$, from the definition of path augmentation we have, $A'_i = A_i \Delta \{g_{j+1},g_j : g_j \in A_i \} = A_i \in \mathcal{I}_i$. Thus, in this case, path $P$ can be extended while satisfying condition $2$. \\

\noindent
\emph{Case {\rm II}:} Agent $i_2 \in L( \mathcal{X}, \Alloc)$ (i.e., $|X_{i_2}| < |A_{i_2}|$) and $g_1 \notin F_{i_2}(X_{i_2})$. In this case, we can apply Corollary \ref{corollary1} by setting $S = \emptyset$, $X = X_{i_2}$ and $A = A_{i_2}$. Note that, $g_1 \in  A_{i_2} \setminus (X_{i_2} \cup F_{i_2}(X_{i_2}))$, implies that the requirement for invoking Corollary \ref{corollary1} (specifically, $g \in A \setminus (X \cup Y)$) is also being met, because as per the corollary statement, the set $Y$ would be a subset of $F_{i_2}(X_{i_2})$. Thus, using Corollary \ref{corollary1} we can infer the existence of a good $g_2 \coloneqq \mu' \in X_{i_2} \setminus A_{i_2}$ such that---analogous to \emph{Case {\rm I}}---condition $2$ is satisfied by the path $P = (g_2, g_1)$.

Now, we will inductively show that, if the current source of path $P$, say good $g_t$, is such that condition $1$ is not satisfied, then we can always find a new vertex $g_{t+1}$ (which is not present in $P$) to extend path $P$, maintaining condition $2$ in the process. Since there are finite number of vertices (goods), this would imply that the inductive process would eventually terminate and we would obtain the required path $P$ that satisfies both conditions $1$ and $2$. \\

\noindent
\emph{\bf Induction step:} Let $P = (g_t, g_{t-1}, \ldots, g_1)$ be the current path where good $g_t \in X_{i_t} \setminus A_{i_t}$ and (inductively) assume that $P$ satisfies condition $2$.
Assume that the good $g_t \in A_{i_{t+1}}$ for some agent $i_{t+1} \in [n]$. Such an agent must exist, because if good $g_t$ is unallocated in $\Alloc$, then the bundles $(A'_1, A'_2, \ldots, A'_n)$ form a non-wasteful allocation such that $|A'_i| = |A_i|$ for each $i \in [n] \setminus \{h\}$ and $|A'_h| = |A_h| + 1$, i.e., the Pareto-efficiency of $\Alloc$ is contradicted. Furthermore, good $g_{t} \notin X_{i_{t+1}}$ because $g_t \in X_{i_t}$ and $i_{t+1} \neq i_t$; here the last proposition follows from the fact that $g_t \in A_{i_{t+1}}$ but $g_t \notin A_{i_t}$. Hence, good $g_t \in A_{i_{t+1}} \setminus X_{i_{t+1}}$.
Suppose if the agent $i_{t+1} \in L(\mathcal{X}, \Alloc)$ and the good $g_t \in F_{i_{t+1}}(X_{i_{t+1}})$, then condition $(2)$ would be satisfied by the path $P = (g_t, g_{t-1}, \ldots, g_1)$; note that, as per the induction hypothesis $P$ already satisfies condition $(1)$. In this case, current path $P$ would be the desired path. Otherwise, similar to the base case of induction, we show that we can extend path $P$ by adding a new vertex. To show this, we consider the following cases.\\

\noindent
\emph{Case {\rm I}:} $i_{t+1} \notin L(\mathcal{X}, \Alloc)$ (i.e., $|X_{i_{t+1}}| \geq |A_{i_{t+1}}|$) and $i_{t+1} \neq h$. Here, we can apply Lemma \ref{lemma1} by using good $g' = g_t$ (note that $g_t \in A_{i_{t+1}} \setminus X_{i_{t+1}}$, satisfying the requirement for invoking the lemma), $X = X_{i_{t+1}}$, $A = A_{i_{t+1}}$, and $S = A_{i_{t+1}} \cap \{ g_{t-1}, \ldots , g_1 \}$ and the one-to-one mapping $\mu = \{ (g_k, g_{k+1}) \ : \ g_k \in A_{i_{t+1}} \text{ and } (g_{k+1}, g_k) \in P \}$. Also, for the invocation we use agent $i_{t+1}$'s matroid.
Using Lemma \ref{lemma1}, we obtain good $g_{t+1} \coloneqq \mu' \in X_{i_{t+1}} \setminus A_{i_{t+1}}$ such that $(a)$ the set $X_{i_{t+1}} - g_{t+1} + g_t \in \mathcal{I}_{i_{t+1}}$ is independent, i.e., the edge $(g_{t+1}, g_t)$ is in the graph $\mathcal{G}(\mathcal{X})$ and $(b)$ the set $A_{i_{t+1}} - (S + g_t) + (\mu(S) + g_{t+1}) \in \mathcal{I}_{i_{t+1}}$. Also, as per Lemma \ref{lemma1}, the good $g_{t+1}$ obtained is such that, $g_{t+1} \notin \mu(S)$, i.e., the good $g_{t+1}$ has previously not been included in the path $P$. We update the path $P = (g_{t+1}, g_t, \ldots , g_1)$. 

Note that, condition $(b)$ can be stated as $A_{i_{t+1}} \Delta \{g_{j+1},g_j : g_j \in A_{i_{t+1}} \} \in \mathcal{I}_{i_{t+1}}$, and since path $P$, before the update, satisfied condition $2$, we have $A_h \Delta \{g_{j+1},g_j : g_j \in A_h \} + g_1 \in \mathcal{I}_h$ along with $A_i \Delta \{g_{j+1},g_j : g_j \in A_i \} \in \mathcal{I}_i$, for each $i \in [n] \setminus \{h, i_{t+1} \}$. Therefore, after updating $P$ condition $2$ continues to be satisfied.\\

\noindent
\emph{Case {\rm II}:} $i_{t+1} = h$. Note that, in this case $i_{t+1} = h \notin L(\mathcal{X}, \Alloc)$ since $h \in H(\mathcal{X}, \Alloc)$. However, this case needs to be considered separately from \emph{Case {\rm I}}, because there the new good obtained, $g_{t+1}$, would lie in the set $X_{i_{t+1}}$ or $X_h$ since $i_{t+1} = h$. Therefore, it might happen that the new good $g_{t+1} = g_1$ (since $g_1 \in X_h$); in which case, on updating $P$ with $g_{t+1}$ ($=g_1$), $P$ would become a cycle. To avoid this scenario, we invoke Lemma \ref{lemma1} by setting $A = A_{i_{t+1}} + g_1$ ($= A_{h} + g_1$) and using the values of other parameters as used in $\emph{Case {\rm I}}$. Note that, $|X_h| > |A_h|$ implies that $|X_h| \geq |A_h + g_1|$, therefore, the requirement for invoking Lemma \ref{lemma1} ($|X| \geq |A|$) would still be met. Invoking Lemma \ref{lemma1} by setting $A = A_{h} + g_1$, ensures that the new good $g_{t+1} \notin (A_h + g_1)$ or simply $g_{t+1} \neq g_1$. Therefore, similar to the previous case, path $P$ can be updated while maintaining condition $2$. \\

\noindent
\emph{Case {\rm III}:} agent $i_{t+1} \in L(\mathcal{X}, \Alloc)$ (i.e., $|X_{i_{t+1}}| < |A_{i_{t+1}}|$) and $g_t \notin F_{i_{t+1}}(X_{i_{t+1}})$. In this case, we can apply Corollary \ref{corollary1} (instead of Lemma \ref{lemma1}) to add a new vertex to path $P$ while maintaining condition $2$.\\

This process of extending $P$, one vertex at a time, will eventually terminate, since at each step a new vertex is added and there are finite number of vertices (goods). Thus, we will eventually end up with path $P$ that satisfies both conditions $1$ and $2$.
\end{proof}

Before presenting the main theorem of this section, we will state another lemma which follows from prior work. For completeness, we provide a proof of the following lemma in Appendix \ref{appendix:MNWisLorenz}. 

\begin{restatable}{lemma}{LemmaMNWisLorenz}
\label{lemma:MNWisLorenz}
Under matroid-rank valuations, a non-wasteful (partial) allocation $\Alloc = (A_1, \ldots, A_n)$ maximizes Nash social welfare iff $\Alloc$ is Lorenz dominating.
\end{restatable}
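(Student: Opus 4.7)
The plan is to reduce both directions of the iff to a classical majorization comparison of the sorted value vectors, via the Hardy--Littlewood--P\'olya (HLP) characterization of weak supermajorization: for $\mathbf{s}, \mathbf{s}' \in \mathbb{R}_{>0}^n$ whose non-decreasing rearrangements $s_{(1)} \le \ldots \le s_{(n)}$ and $s'_{(1)} \le \ldots \le s'_{(n)}$ satisfy $\sum_{i=1}^{j} s_{(i)} \ge \sum_{i=1}^{j} s'_{(i)}$ for every $j \in [n]$, one has $\sum_{i} \phi(s_i) \ge \sum_{i} \phi(s'_i)$ for every nondecreasing concave $\phi$, with strict inequality (whenever $\phi$ is strictly concave) unless the two sorted vectors coincide. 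Applied with $\phi(x) = \log x$, this is precisely a comparison of $n \log \NW$.

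\textbf{Forward direction (LD implies $\NW$-max).} Let $\Alloc = (A_1, \ldots, A_n)$ be a non-wasteful Lorenz dominating allocation, and let $\Alloc' = (A'_1, \ldots, A'_n)$ be any non-wasteful allocation. If $v_i(A'_i) = 0$ for some $i$, then $\NW(\Alloc') = 0 \le \NW(\Alloc)$, so assume $v_i(A'_i) > 0$ for all $i$. The $j = 1$ Lorenz inequality then forces $\min_i v_i(A_i) \ge \min_i v_i(A'_i) > 0$, so $\Alloc$ also has all positive entries. Applying the HLP result with $\phi = \log$ to the sorted value vectors of $\Alloc$ and $\Alloc'$ yields $\NW(\Alloc) \ge \NW(\Alloc')$.

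\textbf{Reverse direction ($\NW$-max implies LD).} Let $\Alloc^*$ be an $\NW$-maximizing allocation and let $\Alloc$ be any Lorenz dominating allocation (which exists by Babaioff et al.~\cite{babaioff2020fair}). By the forward direction, $\NW(\Alloc) = \NW(\Alloc^*)$. Under the standard lexicographic reading of $\NW$-maximization (first maximize the number of positively-valued agents, then maximize the product over them), Lorenz dominance of $\Alloc$ over $\Alloc^*$ forces both allocations to have the same count of positive entries, and hence $\sum_i \log v_i(A_i) = \sum_i \log v_i(A^*_i)$ over this common set. Combined with the partial-sum Lorenz inequalities and the strict-concavity strengthening of HLP (with $\phi = \log$), this equality forces the sorted value vectors of $\Alloc$ and $\Alloc^*$ to coincide. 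Since Lorenz dominance depends only on sorted values, $\Alloc^*$ then Lorenz dominates every allocation that $\Alloc$ does, and is therefore itself Lorenz dominating.

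\textbf{Main obstacle.} The delicate ingredient is the strict-concavity strengthening invoked in the reverse direction: if $\mathbf{s}$ weakly supermajorizes $\mathbf{s}'$ with all entries positive and at least one strict partial-sum inequality, then $\sum_i \log s_i > \sum_i \log s'_i$. This is handled by the standard T-transform (Robin-Hood) decomposition: one can pass from $\mathbf{s}'$ up to $\mathbf{s}$ through a finite sequence of coordinate increments and averaging steps, each of which strictly increases $\sum \log$ by strict concavity of $\log$ on $\mathbb{R}_{>0}$, and $\mathbf{s} \ne \mathbf{s}'$ as sorted vectors guarantees at least one nontrivial step.
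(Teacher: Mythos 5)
Your proof is correct and follows a genuinely different route from the paper's. The paper deduces the lemma almost entirely from two cited results---Theorem~\ref{theorem:MNWiffLeximin} (Benabbou et al.: a partial allocation is leximin iff it is Nash optimal) and Theorem~\ref{theorem:LorenzthenMNW} (Babaioff et al., Proposition~5: every Lorenz dominating allocation is leximin and Nash optimal)---plus the existence of Lorenz dominating allocations. The Lorenz-dominating-implies-Nash-optimal direction is then just a quotation of Theorem~\ref{theorem:LorenzthenMNW}, and the converse is a short indirect argument: a Nash optimal allocation is leximin, all leximin allocations share the same sorted vector, Lorenz domination depends only on that vector, and some leximin allocation is Lorenz dominating, so all are. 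By contrast, you bypass the leximin characterization entirely and argue via weak supermajorization (the Tomi\'c/Hardy--Littlewood--P\'olya inequality for $\prec^w$ and nondecreasing concave $\phi = \log$), proving the Lorenz-dominating-implies-Nash-optimal direction from scratch rather than citing it, and using the strict-concavity (T-transform) strengthening together with the existence of a Lorenz dominating allocation to force the sorted vectors of any Nash optimizer to agree with it. What your route buys is self-containedness and generality: the majorization step does not use matroid structure at all, and the only external ingredient you need is existence of Lorenz dominating allocations, whereas the paper additionally leans on the Benabbou et al.\ leximin equivalence. What the paper's route buys is brevity. One remark: your appeal to the ``lexicographic reading'' of Nash maximization (first maximize the number of positively-valued agents, then the product over them) is doing real work in the reverse direction, and is indeed the convention under which the cited theorems hold in degenerate profiles where every allocation has Nash welfare zero; the paper inherits this convention silently through its citations, whereas you surface it explicitly, which is a small but genuine point in your favor.
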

\begin{theorem}
\label{theorem:gsp}
The $\PE$ mechanism is group strategy-proof.
\end{theorem}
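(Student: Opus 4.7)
I would argue by contradiction: suppose there is a coalition $C \subseteq [n]$ and misreports $(v'_i)_{i \in C}$ with $v_i(A'_i) > v_i(A_i)$ for every $i \in C$, where $\Alloc = \PE(\vb{v})$ and $\Alloc' = \PE(v'_C, v_{-C})$. Non-wastefulness of $\PE$ gives $v_i(A_i) = |A_i|$, while $v_i(A'_i) \le |A'_i|$ holds for any matroid-rank function, so $|A'_i| > |A_i|$ for each $i \in C$. The first step is to convert $\Alloc'$ into a non-wasteful partial allocation $\mathcal{B} = (B_1, \ldots, B_n)$ \emph{under the true profile}: for $i \in C$, take $B_i \subseteq A'_i$ to be a maximum-cardinality independent subset of $\Mat_i$, so $|B_i| = v_i(A'_i) > |A_i|$; for $i \notin C$, set $B_i = A'_i$, which is already independent in $\Mat_i$ because $v'_i = v_i$ and $\Alloc'$ is non-wasteful on the reported profile. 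Thus $\mathcal{B}$ is non-wasteful under $\vb{v}$ and $C \subseteq H(\mathcal{B}, \Alloc)$.

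The core step is to invoke Lemma~\ref{lemma:revpath} on $\mathcal{X} = \mathcal{B}$ and $\Alloc$ (Pareto-efficient because Lorenz-dominating). For any chosen $h \in H(\mathcal{B}, \Alloc)$ it produces a path in $\mathcal{G}(\mathcal{B})$ and some $\ell \in L(\mathcal{B}, \Alloc)$ along which $\Alloc$ augments to a non-wasteful $\Alloc''$ satisfying $|A''_h|=|A_h|+1$, $|A''_\ell|=|A_\ell|-1$ (other bundles unchanged), and (via Lemma~\ref{lemma:path-augmentation}) $\mathcal{B}$ augments to a non-wasteful $\mathcal{B}''$ with $|B''_h|=|B_h|-1$ and $|B''_\ell|=|B_\ell|+1$. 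Identifying Lorenz-dominating with Nash-optimal via Lemma~\ref{lemma:MNWisLorenz}, the inequality $\NW(\Alloc) \ge \NW(\Alloc'')$ forces $(|A_h|+1)(|A_\ell|-1) \le |A_h|\,|A_\ell|$, i.e.\ $|A_\ell| \le |A_h|+1$. In the tight subcase $|A_\ell|=|A_h|+1$, the allocation $\Alloc''$ shares $\Alloc$'s sorted value multiset and is therefore also non-wasteful Nash-optimal, hence a PE candidate; since $\Alloc$ and $\Alloc''$ coincide outside $\{h,\ell\}$ with $|A''_h|>|A_h|$ and $|A''_\ell|<|A_\ell|$, whenever $h<\ell$ the value vector of $\Alloc''$ lex-strictly exceeds that of $\Alloc$, contradicting $\PE$'s lex-maximum tie-breaking rule.

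The residual configurations are either the loose case $|A_\ell| \le |A_h|$ or the tight case with $h>\ell$. In the loose case one has $|B_h|-|B_\ell| \ge (|A_h|+1)-(|A_\ell|-1) = |A_h|-|A_\ell|+2 \ge 2$, so the complementary augmentation strictly increases $\NW(\mathcal{B})$; replacing $\mathcal{B}$ by $\mathcal{B}''$ and re-applying Lemma~\ref{lemma:revpath} can happen only finitely often because $\NW(\mathcal{B})$ is bounded above by $\NW(\Alloc)$, so the loop must eventually force us into the tight regime or directly contradict Nash-optimality of $\Alloc$. The main obstacle I anticipate is the tight-$h>\ell$ regime, in which both swaps preserve NSW and the iteration only permutes $\mathcal{B}$'s value multiset. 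To handle it, I would select $h$ at each iteration to be the least-indexed agent of $C \cap H(\mathcal{B}, \Alloc)$ and argue via a secondary lex-potential on $(|B_1|, \ldots, |B_n|)$: each iteration in this regime makes $\mathcal{B}$'s value vector lex-strictly larger (the first differing position is $\ell$, where $|B_\ell|$ grows by $1$), so the process terminates, and the terminal configuration must either fire the tight $h<\ell$ case (lex contradiction on $\Alloc$) or exhibit a non-wasteful allocation on the true profile whose value vector lex-dominates $\Alloc$'s on a Lorenz-dominating multiset, again contradicting the lex-max property of $\PE$. Tying this potential argument precisely to the least-index choice of $h$ is the delicate point I expect to require the most care.
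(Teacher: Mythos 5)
Your setup and the first half of the argument track the paper closely: you reduce $\Alloc'$ to a non-wasteful partial allocation $\mathcal{B}$ under the true profile, invoke Lemma~\ref{lemma:revpath}, augment $\Alloc$ toward $h$ and away from $\ell$, and use Nash-optimality of $\Alloc$ plus the lexicographic tie-break to conclude that either $|A_\ell| \leq |A_h|$ or ($|A_\ell| = |A_h|+1$ and $h > \ell$). This is exactly the paper's Property~$\mathcal{P}_2$. The gap is in how you try to kill the residual configurations.

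The paper's complementary half (Property~$\mathcal{P}_1$) is derived from a fact you never use: $\mathcal{X} = \PE(\vb{v}')$ is itself Lorenz-dominating, hence Nash-optimal with lex tie-breaking, \emph{with respect to the reported profile} $\vb{v}'$. The paper takes a shortest path $Q$ from $F_\ell(X'_\ell)$ into $\cup_{i \in C} X'_i$; since its interior vertices lie only in truthful agents' bundles (where $X'_j = X_j$), the same path lives in the exchange graph for $\vb{v}'$, so augmenting $\mathcal{X}$ along $Q$ and appealing to the optimality of $\mathcal{X}$ under $\vb{v}'$ yields inequalities that, combined with the choice of $h$ as a minimizer of $|X_i|$ over $C$, directly contradict $\mathcal{P}_2$. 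Your iteration scheme, by contrast, uses only two hypotheses: (a) $\Alloc$ is $\PE$-optimal under $\vb{v}$, and (b) $\mathcal{B}$ is a non-wasteful partial allocation under $\vb{v}$ with $|B_i| > |A_i|$ for all $i \in C$. These two hypotheses are jointly satisfiable, so no amount of re-augmentation and potential-function bookkeeping can extract a contradiction from them. Concretely: with $m=2$ goods, $v_1(S) = |S \cap \{1,2\}|$, $v_2(S) = |S \cap \{1\}|$, the Lorenz-dominating allocation is $\Alloc = (\{2\}, \{1\})$, yet $\mathcal{B} = (\{1,2\}, \emptyset)$ is non-wasteful with $|B_1| > |A_1|$; your loop performs one loose-case swap, lands at a configuration with $H(\mathcal{B}'', \Alloc) = \emptyset$, and terminates with no contradiction. (Of course no actual misreport of agent $1$ produces $\mathcal{B}$ as $\PE$'s output --- but ruling that out is precisely the information carried by the optimality of $\mathcal{X}$ under $\vb{v}'$, which your argument discards.)

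Two secondary issues: augmenting $\mathcal{B}$ via Lemma~\ref{lemma:path-augmentation} requires a \emph{shortest} path between the relevant vertex sets, whereas the path produced by Lemma~\ref{lemma:revpath} need not be shortest; and any augmentation you want to transfer to $\mathcal{X}$ (as opposed to $\mathcal{B}$) must avoid the coalition's bundles except at its endpoint, since $X_i \neq X'_i$ for $i \in C$ and the two exchange graphs differ there. The paper's choice of $Q$ is engineered to handle both points; your sketch does not address either.
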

\begin{proof}
Assume, towards a contradiction, that the mechanism $\PE$ is not group strategy-proof. Specifically, let $\vb{v} = (v_1, v_2, \ldots, v_n)$ be a valuation profile wherein a subset of agents $C \subseteq [n]$ can benefit  by misreporting to profile $\vb{v'} = (v'_1, v'_2, \ldots, v'_n)$; here $v'_j = v_j$ for all agents $j \notin C$. Also, write allocations $\Alloc = (A_1, A_2, \ldots, A_n) = \PE(\vb{v})$ and $\mathcal{X} = (X_1, X_2, \ldots, X_n) = \PE(\vb{v'})$. Since all the misreporting agents $i \in C$ gain by misreporting, we have $v_i(X_i) > v_i(A_i)$, for each $i \in C$.

Let $\mathcal{X}' = (X'_1, X'_2, \ldots, X'_n)$ be the non-wasteful (partial) allocation within $\mathcal{X}$, i.e., for each agent $i \in [n]$, select bundle $X'_i \coloneqq \argmax_{S \subseteq X_i} \{|S| \ : \ S \in \mathcal{I}_i \}$ and note that  $v_i(X_i) = v_i(X'_i) = |X'_i|$. Also, write $B \subseteq C$ to denote the subset of misreporting agents that receive the smallest-size bundle, $B \coloneqq \argmin_{i \in C} |X_i|$.  Considering agents in $B$, we write $h$ to denote the one with the smallest index, i.e., agent $h$ has the lowest value for $|X_h|$, among all agents $i \in C$, and subject to that, she has the smallest index. 

The remainder of the proof will be from the perspective of the valuation profile $\vb{v}$, unless stated otherwise. Note that, the allocation $\mathcal{X}'$ is non-wasteful and allocation $\Alloc$ is a Pareto efficient, since it is Lorenz dominating. Furthermore, for agent $h \in C$, we have $|X'_h| = v_h(X_h) > v_h(A_h) = |A_h|$;  equivalently $h \in H(\mathcal{X}', \Alloc)$. Hence, Lemma \ref{lemma:revpath} ensures the existence of a path $P = (g_k, \ldots , g_1)$ in $\mathcal{G}(\mathcal{X}')$ that satisfies conditions $1$ and $2$ mentioned in the lemma statement. From condition $1$, we know that path $P$ starts at $F_\ell(X'_\ell)$ (for some agent $\ell \in L(\mathcal{X}', \Apo)$) and ends at $X'_h$. %; $h \in H(\mathcal{X}', \Apo)$.

For establishing the theorem, i.e.,  to arrive at a contradiction, we will prove two properties, $\mathcal{P}_1$ and $\mathcal{P}_2$, that directly contradict each other:
\begin{align*}
\text{Property } \mathcal{P}_1: \begin{cases} 
|A_h| < |A_\ell| - 1, & \text{ if } h > \ell  \\ 
|A_h| \leq |A_\ell| - 1, & \text{ otherwise, if } h<\ell. 
\end{cases} \\
\text{Property } \mathcal{P}_2: \begin{cases} 
|A_h| \geq |A_\ell| - 1, & \text{ if } h>\ell \\
|A_h| > |A_\ell| - 1, & \text{ otherwise, if } h<\ell
\end{cases}
\end{align*}
We will derive $\mathcal{P}_1$ by using condition $1$ (from Lemma \ref{lemma:revpath}) satisfied by path $P$ and Lemma \ref{lemma:path-augmentation}. Property $\mathcal{P}_2$ will be obtained by the fact that $P$ satisfies condition $2$ in Lemma \ref{lemma:revpath}. Hence, we complete the proof by establishing $\mathcal{P}_1$ and $\mathcal{P}_2$ next. \\

\noindent
\emph{\bf{Property $\mathcal{P}_1$:}} Recall that path $P$ starts at $F_\ell(X'_\ell)$, for some agent $\ell \in L(\mathcal{X}', \Apo)$ and ends at $X'_h$ with $h \in H(\mathcal{X}', \Apo)$. 
Since all the agents in $C$ gain (by misreporting), $C \subseteq H(\mathcal{X}', \Apo)$. Therefore, the fact that $\ell \in L(\mathcal{X}', \Apo)$ gives us $\ell \notin C$. Now, write $Q$ to denote a shortest path from $F_\ell(X'_\ell)$ to $\cup_{i \in C} X'_i$; the existence of $P$ guarantees that such a path exists. Further, assume that $Q$ ends at $X'_b$ for some $b \in C$. From the definition of agent $h$, we have\footnote{In fact, $|X_h| \leq |X_b|$  irrespective of the agents' indices.}
\begin{align}
\label{equation:1}
|X_h| \leq |X_b| \text{ if } h \leq b \text{ and } |X_h| < |X_b| \text{ if } h > b
\end{align}

Furthermore, using the facts that $|X'_h| \geq |A_h| + 1$ (since $h \in H(\mathcal{X}', \Apo)$) and $|X_h| \geq |X'_h|$ (since $X'_h \subseteq X_h$), equation (\ref{equation:1})  reduces to 
\begin{align}
\label{equation:4}
|A_h| + 1 \leq |X_b| \text{ if } h \leq b \text{, and } |A_h| + 2 \leq |X_b| \text{ if } h > b
\end{align}

Note that the path $Q$ in $\mathcal{G}(\mathcal{X}')$ is such that only its sink vertex lies in $\cup_{k \in C} X'_k$; all the other vertices in $Q$ are present in bundles $X'_j$ with $j \notin C$. For all agents $j \notin C$, the valuation functions $v'_j$ and $v_j$ are the same. Hence, the (non-wasteful) bundles $X'_j$ and $X_j$ are equal as well, for all $j \notin C$. These observations imply that the path $Q$ also lies in the exchange graph $\mathcal{G}(\mathcal{X})$, where the graph is constructed with respect to matroids corresponding to the profile $\vb{v'}$. Therefore, allocation $\mathcal{X}=\PE(\vb{v'})$ can be augmented with path $Q$ in $\mathcal{G}(\mathcal{X})$, which starts at $F_\ell(X_\ell)$ and ends at $X_b$ (see Lemma \ref{lemma:path-augmentation}).

Indeed, performing path augmentation on $\mathcal{X}$, via $Q$, will increase $|X_\ell|$ by one, decrease $|X_b|$ by one, and the bundle sizes (and values) of other agents will remain unchanged. Also, since allocation $\mathcal{X}$ is optimal with respect to $\PE$'s selection criteria (and considering profile $\vb{v}'$), the distinct allocation obtained by path augmentation must be sub-optimal (again, with respect to $\PE$'s criteria). The Lorenz domination of $\mathcal{X}$ (equivalently its Nash optimality) ensures that $|X_b| \leq |X_\ell|+1$; otherwise the resultant allocation will have higher Nash social welfare (under $\vb{v}'$). In fact, if $b > \ell$, then we must have $|X_b| \leq |X_\ell|$. Otherwise (i.e., in case $b > \ell$ and $|X_b| = |X_\ell|+1$) , the resultant allocation will have the same Nash social welfare as $\mathcal{X}$ (i.e., the resultant allocation will also be Lorenz dominating) and would get preferred (over $\mathcal{X}$) under the lexicographic tie-breaking of $\PE$.  Therefore, 
\begin{align}
\label{equation:2}
|X_b| \leq |X_\ell| \text{ if } b>\ell \text{, and } |X_b| - 1 \leq |X_\ell| \text{ if } b<\ell
\end{align}
Using the bounds $|X'_\ell| \leq |A_\ell| - 1$ (since $\ell \in L(\mathcal{X}', \Apo)$) and $|X_\ell| = |X'_\ell|$ (recall that $\ell \notin C$ and, hence, $v_\ell = v'_\ell$) along with equation (\ref{equation:2}), we obtain
\begin{align}
\label{equation:3}
|X_b| \leq |A_\ell| - 1 \text{ if } b>\ell \text{, and } |X_b| \leq |A_\ell| \text{ if } b<\ell
\end{align}

Towards establishing property $\mathcal{P}_1$, we combine equations (\ref{equation:4}) and (\ref{equation:3}) by considering the following four cases
\begin{align*}
& \text{ \emph{Case {\rm I}.} } h \leq b \text{ and } b>\ell \text{: } |A_h| + 1 \leq |X_b| \leq |A_\ell| - 1\\
& \text{ \emph{Case {\rm II}.} } h > b \text{ and } b<\ell \text{: } |A_h| + 2 \leq |X_b| \leq |A_\ell|\\
& \text{ \emph{Case {\rm III}.} } h > b \text{ and } b>\ell \text{: } |A_h| + 2 \leq |X_b| \leq |A_\ell| - 1\\
& \text{ \emph{Case {\rm IV}.} } h \leq b \text{ and } b<\ell \text{: } |A_h| + 1 \leq |X_b| \leq |A_\ell|
\end{align*}
Finally, we simplify the four cases above to obtain $\mathcal{P}_1$. Note that $h> \ell$ is possible only in \emph{Cases {\rm I}, {\rm II},} or \emph{{\rm III}}, and in all these cases we have $|A_h| + 2 \leq |A_\ell|$ or equivalently $|A_h| < |A_\ell| - 1$. Similarly, $h < \ell$ can happen only in \emph{Cases {\rm I}, {\rm II},} or \emph{{\rm IV}} and there we have $|A_h| + 1 \leq |A_\ell|$, which is same as $|A_h| \leq |A_\ell| - 1$. Therefore, we obtain property $\mathcal{P}_1$:

\begin{align}
|A_h| < |A_\ell| - 1 \text{ if } h>\ell \text{, and } |A_h| \leq |A_\ell| - 1 \text{ if } h<\ell
\end{align}

\noindent
\emph{\bf{Property $\mathcal{P}_2$:}} Here, we will use the fact that condition $2$ in Lemma \ref{lemma:revpath} is satisfied by the path $P$. The condition  implies that the bundles $\widehat{A}_h \coloneqq A_h \Delta \{g_{j+1},g_j : g_j \in A_h \} + g_1 \in \mathcal{I}_h$ and $\widehat{A}_\ell \coloneqq A_\ell \Delta \{g_{j+1},g_j : g_j \in A_\ell \} - g_k \in \mathcal{I}_\ell$ along with $\widehat{A}_i \coloneqq A_i \Delta \{g_{j+1},g_j : g_j \in A_i \} \in \mathcal{I}_i$ for each $i \notin \{h,\ell\}$. Furthermore, by definition, the bundles $\widehat{A}_1, \widehat{A}_2, \ldots, \widehat{A}_n$ are pairwise disjoint, i.e., $\widehat{\Alloc} = (\widehat{A}_1, \widehat{A}_2, \ldots, \widehat{A}_n)$ forms a non-wasteful allocation (with respect to profile $\vb{v}$).

Recall that $\Apo = \PE(\vb{v})$, and we have $|\widehat{A}_h| = |A_h|+1$ along with $|\widehat{A}_\ell| = |A_\ell|-1$; the bundle sizes of all the other agents remain unchanged. Furthermore, in contrast to $\Apo$, the distinct allocation $\widehat{A}$ must be sub-optimal under the selection criteria of $\PE$ (applied to valuation profile $\vb{v}$). Therefore, $|A_\ell| \leq |A_h| + 1$; otherwise $\widehat{\Alloc}$ will have higher Nash social welfare (under $\vb{v}$) than $\Alloc$, contradicting the optimality of $\Alloc$ (see Lemma \ref{lemma:MNWisLorenz}). In fact, if $h < \ell$, then we must have $|A_\ell| < |A_h| + 1$. Otherwise (i.e., in case $h < \ell$ and $|A_\ell| = |A_h| + 1$), the allocation $\widehat{\Alloc}$ will have the same Nash social welfare as $\mathcal{A}$ (i.e., $\widehat{\Alloc}$ will also be Lorenz dominating) and would get preferred (over $\mathcal{A}$) under the lexicographic tie-breaking of $\PE$. These observations lead to property $\mathcal{P}_2$: 
\begin{align}
|A_h| \geq |A_\ell| - 1 \text{ if } h>\ell \text{, and } |A_h| > |A_\ell| - 1 \text{ if } h<\ell
\end{align}
This completes the proof, since properties $\mathcal{P}_1$ and $\mathcal{P}_2$ directly contradict each other.
\end{proof}

\bibliographystyle{alpha} 
\bibliography{fairness,ultimate}

\appendix
\section{Missing Proofs from Section~\ref{section:impossibility}}
\label{appendix:section-impossibility}

Two mechanisms $f$ and $f'$ are said to be \emph{equivalent} iff, for every valuation profile $\vb{v} = (v_1,\ldots, v_n)$, each agent receives the same value under both $f$ and $f'$, i.e., the allocations $(A_1, \ldots, A_n) = f(\vb{v})$ and $(B_1, B_2, \ldots, B_n) = f'(\vb{v})$ satisfy $v_i(A_i) = v_i(B_i)$ for all agents $i \in [n]$.

\begin{proposition}
\label{proposition:nw-wlog}
Under matroid-rank valuations, for any truthful and index-oblivious mechanism $f$, there exists an equivalent mechanism $f'$ which, in addition to being truthful and index-oblivious, is also non-wasteful. 
\end{proposition}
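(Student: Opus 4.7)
The plan is to use the standard trimming construction (already sketched in the paragraph preceding the definition of gradual mechanisms) and observe that, because equivalence of two mechanisms preserves both truthfulness and index-obliviousness, essentially no extra work is required. Specifically, I would define $f'$ as follows: on input $\vb{v} = (v_1,\ldots,v_n)$, compute $(A_1,\ldots,A_n) = f(\vb{v})$; for each agent $i \in [n]$, select a largest-cardinality subset $A'_i \subseteq A_i$ with $A'_i \in \mathcal{I}_i$ (i.e.\ a basis of $A_i$ in the matroid $\mathcal{M}_i$ associated with $v_i$), breaking ties by some fixed rule such as lexicographic order; and output $(A'_1,\ldots,A'_n)$. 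Since the $A_i$'s are pairwise disjoint, so are the $A'_i$'s, so $(A'_1,\ldots,A'_n)$ is a valid partial allocation.

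Non-wastefulness and equivalence follow immediately from matroid theory: $v_i(A'_i) = |A'_i| = v_i(A_i)$ for every $i$, since $A'_i$ is a basis of $A_i$ in $\mathcal{M}_i$. The next step is to deduce truthfulness of $f'$ from truthfulness of $f$. For any agent $i$, any profile $\vb{v}$, and any misreport $v'_i$, let $(B_1,\ldots,B_n) = f(v'_i, v_{-i})$ and $(B'_1,\ldots,B'_n) = f'(v'_i, v_{-i})$. Then
\[
v_i(A'_i) \;=\; v_i(A_i) \;\geq\; v_i(B_i) \;\geq\; v_i(B'_i),
\]
where the first equality is equivalence, the first inequality is truthfulness of $f$, and the last inequality uses $B'_i \subseteq B_i$ together with monotonicity of $v_i$. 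Note that $B'_i$ is trimmed with respect to the matroid of $v'_i$ rather than of $v_i$, so we cannot assert $v_i(B_i) = v_i(B'_i)$; only the monotonicity direction is needed, which is exactly the direction required.

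For index-obliviousness, fix any permutation $\pi:[m]\mapsto[m]$ and let $(C_1,\ldots,C_n) = f(\vb{v}^\pi)$ and $(C'_1,\ldots,C'_n) = f'(\vb{v}^\pi)$. Because $v^\pi_i$ is again a matroid-rank function (it is the rank function of the matroid obtained from $\mathcal{M}_i$ by reindexing via $\pi$), equivalence applies at $\vb{v}^\pi$ as well, giving $v^\pi_i(C'_i) = v^\pi_i(C_i)$. Chaining equalities,
\[
v_i(A'_i) \;=\; v_i(A_i) \;=\; v^\pi_i(C_i) \;=\; v^\pi_i(C'_i),
\]
where the middle equality is index-obliviousness of $f$. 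This is exactly the condition in the definition of index-obliviousness for $f'$.

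There is no substantial obstacle here; the whole argument rests on the observation that the defining conditions of truthfulness, index-obliviousness, and equivalence are each phrased purely in terms of the scalar values $v_i(\cdot)$ that agents receive, so any value-preserving post-processing of $f$ inherits both properties. The only point deserving care is that in the truthfulness step the trimming at the misreport profile is performed with respect to the reported matroid, not the true one; this is handled by invoking monotonicity of $v_i$ to move from $B_i$ to $B'_i$ in the correct direction.
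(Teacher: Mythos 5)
Your proof is correct and follows essentially the same approach as the paper: trim each bundle $A_i$ to a maximum-cardinality independent subset, then chain the same three observations (equivalence by construction, truthfulness via monotonicity of $v_i$ applied to $B'_i \subseteq B_i$, and index-obliviousness via equivalence at both $\vb{v}$ and $\vb{v}^\pi$). The remark that the trimming at the misreport profile is with respect to the reported matroid, so only the monotonicity direction is available, is a nice explicit flagging of a point the paper handles implicitly.
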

\begin{proof}
We construct mechanism $f'$ from the given mechanism $f$: for any profile $\vb{v} = (v_1,\ldots, v_n)$, consisting of matroid-rank valuations, let allocation $(A_1,\ldots, A_n) = f(\vb{v})$. Now, for every bundle $A_i$, consider a subset $B_i \subseteq A_i$ with the property that  
\begin{align}
\label{equation1}
v_i(B_i) = |B_i| = v_i(A_i),
\end{align}
Since $v_i$ is a matroid-rank, such a subset $B_i$ is guaranteed to exist; in particular, $B_i$ is the largest-cardinality, independent subset of $A_i$.  If there are multiple such $B_i$s, we break ties in an arbitrary, but consistent, manner. 

For the profile $\vb{v} = (v_1,\ldots, v_n)$ and mechanism $f'$, we set the output allocation $ f'(\vb{v}) = (B_1, \ldots, B_n)$. By construction, the mechanism $f'$ is non-wasteful since $v_i(B_i) = |B_i|$, for each $i \in [n]$.

We first note that $f'$ and $f$ are equivalent, i.e., they assign the same value to each agent. This directly follows from the construction of $f'$ to satisfy $v_i(B_i) = v_i(A_i)$, for any profile $\vb{v} = (v_1,\ldots, v_n)$ and corresponding allocations  $(A_1,\ldots, A_n) = f(\vb{v})$ and $(B_1,\ldots, B_n) = f'(\vb{v})$. 

Furthermore, the equivalence of $f$ and $f'$ ensures that truthfulness and index-obliviousness for the latter mechanism. Specifically, since $f$ is truthful, for each agent $i \in [n]$, any profile $\vb{v} = (v_1,\ldots, v_n)$, and function (misreport) $v'_i$, we have
\begin{align}
\label{equation2}
v_i(A_i) \geq v_i(A'_i).
\end{align}
Here, allocations $(A_1,\ldots, A_n) = f(\vb{v})$ and $(A'_1, A'_2, \ldots, A'_n) = f(v_1, \ldots v_{i-1}, v'_i, v_{i+1}, \ldots, v_{n})$. Towards establishing truthfulness of $f'$, consider allocations $(B_1,\ldots, B_n) = f'(\vb{v})$ and $(B'_1, B'_2, \ldots, B'_n) = f'(v_1, \ldots, \allowbreak v_{i-1}, v'_i, v_{i+1}, \ldots, v_{n})$. By definition of $f'$, we know that, the subset $B'_i \subseteq A'_i$ and, hence, $v_i(B'_i) \leq v_i(A'_i)$. This inequality and equation (\ref{equation2}) gives us 
\begin{align*}
v_i(B'_i) & \leq v_i(A'_i) \\
& \leq v_i(A_i) \tag{via (\ref{equation2})} \\
& = v_i(B_i) \tag{via (\ref{equation1})}.
\end{align*}
The obtained bound, $v_i(B_i) \geq v_i(B'_i)$, implies that the mechanism $f'$ is truthful. 

Finally, we show that $f'$ is index-oblivious as well. For any permutation $\pi : [m] \mapsto [m]$, write allocation $(A^\pi_1, A^\pi_2, \ldots, A^\pi_n) = f(v^\pi_1, v^\pi_2, \ldots, v^\pi_n)$ and $(B^\pi_1, B^\pi_2, \ldots, B^\pi_n) = f'(v^\pi_1, v^\pi_2, \ldots, v^\pi_n)$. Since the mechanism $f$ is index-oblivious, for all agents $i \in [n]$, we have
\begin{align}
\label{equation3}
v_i(A_i) = v^\pi_i(A^\pi_i),
\end{align}
In addition, the construction of mechanism $f'$ ensures, for all agents $i \in [n]$: 

\begin{align}
\label{equation4}
v_i(A_i) = v_i(B_i) \text{ and } v^\pi_i(A^\pi_i) = v^\pi_i(B^\pi_i).
\end{align}
Combining these equalities we get
\begin{align*}
v_i(B_i) & = v_i(A_i) \tag{via (\ref{equation4})} \\
& = v^\pi_i(A^\pi_i) \tag{via (\ref{equation3})} \\
& = v^\pi_i(B^\pi_i) \tag{via (\ref{equation4})}.
\end{align*}
The obtained equality, $v_i(B_i) = v^\pi_i(B^\pi_i)$ implies that $f'$ is index-oblivious. This completes the proof.  
\end{proof}

\section{Nash Welfare and Lorenz Domination}
\label{appendix:MNWisLorenz}

This section provides a proof of Lemma \ref{lemma:MNWisLorenz}. We begin by stating the relevant definitions and known results that will lead to the proof. 

For a valuation profile $(v_1, \ldots, v_n)$ and an (partial) allocation $\Alloc = (A_1, A_2, \ldots, A_n)$, write $\vb{s}_{\Alloc} = (s_1, s_2, \ldots, s_n)$ to denote the sorted---in non-decreasing order---version of vector $(v_1(A_1), v_2(A_2) \ldots, v_n(A_n))$, i.e., $s_1$ denotes the lowest valuation across the agents, $s_2$ is the second lowest, and so on. 

\begin{definition}[Leximin Allocation]
An (partial) allocation $\Alloc = (A_1, A_2, \ldots, A_n)$ is said to be \emph{leximin} iff, among the set of all (partial) allocations, $\Alloc$ maximizes the vector $\vb{s}_{\Alloc} = (s_1, s_2, \ldots, s_n)$ lexicographically, i.e., it maximizes $s_1$, subject to that $s_2$, and so on.
\end{definition}
Note that a leximin allocation is guaranteed to exist. Also, if there are multiple leximin allocations $\Alloc$, then all of them have the sam sorted vector $\vb{s}_{\Alloc}$. Hence, all leximin allocations have the same Nash social welfare. 

The following theorems are known about leximin allocations.

\begin{theorem}[\cite{benabbou2020finding}]
\label{theorem:MNWiffLeximin}
Under matroid-rank valuations, an (partial) allocation is leximin iff it maximizes Nash social welfare.
\end{theorem}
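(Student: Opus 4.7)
The plan is to prove both directions of the biconditional via exchange arguments tailored to matroid-rank valuations, using the path augmentation machinery of Lemma~\ref{lemma:path-augmentation} as the principal tool. The basic engine is a \emph{value transfer}: given a non-wasteful allocation $\Alloc$ together with agents $i,j$ and an augmenting path in the exchange graph $\mathcal{G}(\Alloc)$ from $F_i(A_i)$ to $A_j$, Lemma~\ref{lemma:path-augmentation} yields a non-wasteful allocation $\Alloc'$ with $v_i(A'_i)=v_i(A_i)+1$, $v_j(A'_j)=v_j(A_j)-1$, and all other bundle values unchanged. Observe that
$$ (v_i(A_i)+1)\,(v_j(A_j)-1) - v_i(A_i)\,v_j(A_j) \;=\; v_j(A_j) - v_i(A_i) - 1, $$
so such a transfer strictly increases the Nash product (hence $\NW$) whenever $v_j(A_j) \geq v_i(A_i)+2$. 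Since any wasteful bundle can be trimmed to a maximum-cardinality independent subset without changing the value vector, I would also remark at the outset that it suffices to treat non-wasteful allocations throughout.

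For the forward direction, $\NW$-maximal $\Rightarrow$ leximin, I would suppose for contradiction that $\Alloc$ is a non-wasteful $\NW$-maximizer that fails to be leximin, and fix a leximin witness $\Alloc^*$. Then $\vb{s}_{\Alloc^*}$ strictly lex-dominates $\vb{s}_{\Alloc}$, from which I would extract an agent $i$ with $v_i(A_i^*) > v_i(A_i)$ and an agent $j$ with $v_j(A_j) > v_j(A_j^*)$, chosen at the first position where the two sorted profiles diverge, so that $v_j(A_j) \geq v_i(A_i) + 2$; the gap of at least two is possible because matroid-rank values are integers and the value $v_j(A_j)$ sits at (or above) a position in $\vb{s}_{\Alloc}$ strictly below the corresponding slot in $\vb{s}_{\Alloc^*}$. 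To produce the needed path in $\mathcal{G}(\Alloc)$ from $F_i(A_i)$ to $A_j$, I would use the fact that $A_i^* \setminus A_i$ supplies goods in $F_i(A_i)$, and the chain of exchanges hidden in the symmetric difference of $\Alloc$ and $\Alloc^*$ (across the matroids $\Mat_k$) must eventually terminate at a good held by $j$. Augmenting along this path yields an allocation whose $\NW$ is strictly larger than $\NW(\Alloc)$, contradicting the $\NW$-maximality of $\Alloc$.

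For the reverse direction, leximin $\Rightarrow$ $\NW$-maximal, I would appeal to the forward direction. Suppose $\Alloc$ is leximin but not $\NW$-maximal, and let $\Alloc^{**}$ be any $\NW$-maximizer; then $\NW(\Alloc^{**}) > \NW(\Alloc)$. By the forward direction, $\Alloc^{**}$ is also leximin. But all leximin allocations share the same sorted value vector, and $\NW$ is a symmetric function of that vector, so $\NW(\Alloc^{**}) = \NW(\Alloc)$, contradicting the strict inequality.

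The main obstacle is the extraction of the augmenting path in the forward direction. Ensuring simultaneously that (a) the chosen $i$ and $j$ satisfy $v_j(A_j) - v_i(A_i) \geq 2$ and (b) there actually is an $F_i(A_i)$-to-$A_j$ path in $\mathcal{G}(\Alloc)$ requires translating the \emph{global} lex-dominance of $\vb{s}_{\Alloc^*}$ over $\vb{s}_{\Alloc}$ into a \emph{local} combinatorial exchange structure; this is the step where matroid-specific machinery (iterated augmentation through the symmetric differences of $A_k$ and $A_k^*$ in each $\Mat_k$, in the spirit of matroid union/Hall-type arguments) does the heavy lifting.
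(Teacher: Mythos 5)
The paper does not prove this theorem---it is cited from \cite{benabbou2020finding} and used as a black box in Appendix~\ref{appendix:MNWisLorenz}---so there is no internal proof to compare against; I will evaluate your argument on its own terms.

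Your strategy (exchange-based value transfers, the Nash-product identity $v_j(A_j)-v_i(A_i)-1$, and a contradiction via augmentation) is the standard one and could be made to work, but the forward direction has a genuine gap exactly where you flagged the ``main obstacle.'' The claim that from the first diverging position of $\vb{s}_{\Alloc^*}$ and $\vb{s}_{\Alloc}$ one can extract agents $i,j$ with $v_i(A_i^*)>v_i(A_i)$, $v_j(A_j)>v_j(A_j^*)$, and $v_j(A_j)\ge v_i(A_i)+2$ is not established and is in fact false without further work. First, the leximin witness $\Alloc^*$ may Pareto-dominate $\Alloc$ outright (e.g.\ $\vb{s}_{\Alloc}=(1,2,2)$ vs.\ $\vb{s}_{\Alloc^*}=(2,2,2)$, where $\Alloc^*$ allocates one more good); then no agent $j$ satisfies $v_j(A_j)>v_j(A_j^*)$ and there is nothing to transfer. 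You would have to separately argue that Nash-maximality forces Pareto-optimality of $\Alloc$, and hence a common total value with $\Alloc^*$; that step is trivial only when $\NW(\Alloc)>0$, a positivity assumption your Nash-product identity already silently needs and which you should make explicit (without it the theorem itself fails, e.g.\ two agents, one good wanted only by agent~$1$: both $(\{1\},\emptyset)$ and $(\emptyset,\emptyset)$ have $\NW=0$, only the former is leximin). Second, even when totals coincide and a gap of two exists somewhere in the sorted vectors, it must hold between the two specific endpoints of the augmenting path you invoke, and you cannot choose both endpoints freely: Lemma~\ref{lemma:path-augmentation} assumes a path between a chosen $i,j$ already exists, while the existence machinery (the paper's Lemma~\ref{lemma:revpath}) fixes the high-value sink $h$ but returns a source $\ell$ determined by the shortest-path search, with no guarantee that $|A_h|\ge|A_\ell|+2$. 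Bridging this requires either iterating the exchange until a ``useful'' pair appears, or appealing directly to the polymatroid-base fact that local $2$-exchange optimality ($a_j\le a_i+1$ for every feasible transfer) is equivalent to both leximin and Nash-optimality---which is really the content of the theorem. Your reverse direction, reducing leximin~$\Rightarrow$~Nash-optimal to the forward direction via uniqueness of the leximin sorted vector, is a clean argument conditional on the forward direction being repaired.
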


Recall that, if the agents' valuations are matroid-rank functions, then a  Lorenz dominating allocation is guaranteed to exist~\cite{babaioff2020fair}. 

\begin{theorem}[Proposition 5 in \cite{babaioff2020fair}]
\label{theorem:LorenzthenMNW}
Under matroid-rank valuations, every Lorenz dominating allocation is leximin and it maximizes Nash social welfare as well. 
\end{theorem}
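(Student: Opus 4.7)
The plan is to prove the two directions separately, invoking the two theorems (\ref{theorem:MNWiffLeximin} and \ref{theorem:LorenzthenMNW}) already cited in the appendix, and then bridging them with the observation that leximin allocations share a common sorted utility vector.

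For the easier direction, that every Lorenz dominating non-wasteful allocation maximizes Nash social welfare, I would simply appeal to Theorem \ref{theorem:LorenzthenMNW}, which states precisely this (Lorenz dominating $\Rightarrow$ leximin $\Rightarrow$ Nash optimal). So this direction is essentially immediate.

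For the converse, suppose $\Alloc$ is a non-wasteful allocation that maximizes Nash social welfare. First, I would invoke Theorem \ref{theorem:MNWiffLeximin} to conclude that $\Alloc$ is leximin. Next, I would invoke Theorem \ref{theorem:LorenzthenMNW} to obtain the existence of some Lorenz dominating allocation $\Alloc^*$, which is also leximin. The key observation is then that any two leximin allocations produce the same sorted utility vector (this follows directly from the definition of leximin: the sorted vector is the lexicographic maximum over all allocations, so it is uniquely determined). Hence $\vb{s}_{\Alloc} = \vb{s}_{\Alloc^*}$. Since $\Alloc^*$ is Lorenz dominating, its sorted utility vector majorizes the sorted utility vector of every other allocation; because $\vb{s}_{\Alloc}$ equals $\vb{s}_{\Alloc^*}$ component-wise, $\Alloc$ also majorizes every other allocation, i.e., $\Alloc$ is Lorenz dominating.

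The proof is essentially a short deduction from results already in the literature, so there is no substantive obstacle. The only subtle point to spell out is the uniqueness (up to permutation of agents) of the leximin sorted utility vector; this follows directly from the definition of the lexicographic maximum, but it is worth stating explicitly to justify the equality $\vb{s}_{\Alloc} = \vb{s}_{\Alloc^*}$ on which the argument pivots.
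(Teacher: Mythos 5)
Your proposal does not prove the statement it was asked to prove; it is circular. The statement in question \emph{is} Theorem~\ref{theorem:LorenzthenMNW} (every Lorenz dominating allocation is leximin and Nash optimal), yet your ``easier direction'' is handled by appealing to Theorem~\ref{theorem:LorenzthenMNW} itself, i.e., by assuming the very claim to be established. The rest of your argument (Nash optimal $\Rightarrow$ leximin via Theorem~\ref{theorem:MNWiffLeximin}, uniqueness of the sorted leximin vector, hence Lorenz domination) proves the \emph{converse} implication; it is essentially the paper's proof of Lemma~\ref{lemma:MNWisLorenz}, which is a different statement and which itself uses Theorem~\ref{theorem:LorenzthenMNW} as an ingredient. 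Nothing in the proposal establishes, non-circularly, either that a Lorenz dominating allocation is leximin or that it maximizes Nash social welfare. (In the paper this theorem carries no internal proof at all: it is imported as Proposition~5 of Babaioff et al.~\cite{babaioff2020fair}.)

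A correct and short argument is available if you want one that stays within the paper's toolkit. First show Lorenz dominating $\Rightarrow$ leximin, which needs no matroid structure: if $\Alloc$ is Lorenz dominating but not leximin, take a leximin allocation $\mathcal{B}$; then there is an index $k$ such that the first $k-1$ entries of $\vb{s}_{\mathcal{B}}$ and $\vb{s}_{\Alloc}$ coincide while the $k$-th entry of $\vb{s}_{\mathcal{B}}$ is strictly larger, so the sum of the first $k$ entries of $\vb{s}_{\mathcal{B}}$ strictly exceeds that of $\vb{s}_{\Alloc}$, contradicting the assumption that $\Alloc$ Lorenz dominates $\mathcal{B}$. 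Then, under matroid-rank valuations, leximin $\Rightarrow$ Nash optimal follows from Theorem~\ref{theorem:MNWiffLeximin}. Note the logical order: this theorem must be proved (or cited, as the paper does) \emph{before} Lemma~\ref{lemma:MNWisLorenz}, since that lemma's proof relies on it; your proposal inverts this dependency.
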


Now, we will restate and prove Lemma \ref{lemma:MNWisLorenz}. Note that this lemma complements the two theorems mentioned above. 

\LemmaMNWisLorenz*
\begin{proof}
The reverse direction (i.e., Lorenz domination implies Nash optimality) follows directly from Theorem \ref{theorem:LorenzthenMNW}. 

Hence, it remains to address the forward direction (Nash optimality implies Lorenz domination). Here, assume, towards a contradiction, that an (partial) allocation $\Alloc$ is Nash optimal but is not Lorenz dominating. Since $\Alloc$ maximizes Nash social welfare, Theorem \ref{theorem:MNWiffLeximin} implies that $\Alloc$ is also leximin. Therefore, allocation $\Alloc$ is leximin but not Lorenz dominating.

Note that all leximin allocations $\mathcal{B}$ have the same vector $\vb{s}_{\mathcal{B}}$ and the notion of Lorenz domination depends only on the vector $\vb{s}_{\mathcal{B}}$. Therefore, if leximin allocation $\Alloc$ is not Lorenz dominating, then no other leximin allocation can be Lorenz dominating. 
This contradicts the fact that Lorenz dominating allocations are themselves leximin (Theorem \ref{theorem:LorenzthenMNW}). Hence, the lemma stands proved. 
\end{proof}

\section{Index-Oblivious Mechanism for $\EFone$}
\label{section:efone-index-ob}

Recall that, for matroid-rank valuations, the mechanism $\PE$ outputs $\EFone$ allocations and it is truthful as well as Pareto efficient \cite{babaioff2020fair}. The following theorem shows that $\PE$ is index-oblivious as well. 

Notably, the proof of this theorem generalizes to show index-obliviousness for all deterministic mechanisms that output (partial) allocations $(A_1,\ldots A_n) $ solely  based on the their value tuple $(v_1(A_1), \ldots, \allowbreak v_n(A_n))$.

\begin{theorem}
\label{theorem:mnwio}
The $\PE$ mechanism is index-oblivious.
\end{theorem}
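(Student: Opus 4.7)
The plan is to exploit the fact that the $\PE$ mechanism's selection criterion depends only on the agents' value tuple $(v_1(A_1), \ldots, v_n(A_n))$ and not on the particular identity (index) of the goods inside each bundle. I would introduce, for any permutation $\pi: [m] \mapsto [m]$ and any allocation $\Alloc = (A_1, \ldots, A_n)$ under profile $\vb{v}$, the companion allocation $\Alloc^\pi \coloneqq (\pi(A_1), \ldots, \pi(A_n))$, and verify that $\Alloc \mapsto \Alloc^\pi$ is a bijection between allocations under $\vb{v}$ and allocations under $\vb{v}^\pi$. The key elementary identity is $v^\pi_i(\pi(A_i)) = v_i(\pi^{-1}(\pi(A_i))) = v_i(A_i)$, so the value vector of $\Alloc^\pi$ under $\vb{v}^\pi$ coincides, component-wise, with the value vector of $\Alloc$ under $\vb{v}$.

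Next I would check that this bijection preserves each of the properties $\PE$ uses to select an allocation. First, since $|\pi(A_i)| = |A_i|$ and $v^\pi_i(\pi(A_i)) = v_i(A_i)$, the allocation $\Alloc$ is non-wasteful under $\vb{v}$ iff $\Alloc^\pi$ is non-wasteful under $\vb{v}^\pi$. Second, Lorenz domination depends only on the sorted value vector $\vb{s}_{\Alloc}$, and the bijection preserves the (unsorted, hence also sorted) value vector; therefore $\Alloc$ Lorenz dominates all allocations under $\vb{v}$ iff $\Alloc^\pi$ Lorenz dominates all allocations under $\vb{v}^\pi$. Third, the lexicographic tie-breaking in $\PE$ is performed on $(v_1(A_1), \ldots, v_n(A_n))$, which, by the identity above, equals the analogous vector $(v^\pi_1(\pi(A_1)), \ldots, v^\pi_n(\pi(A_n)))$ computed on $\Alloc^\pi$ under $\vb{v}^\pi$.

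To conclude, let $\Alloc = \PE(\vb{v})$ and $\Alloc' = \PE(\vb{v}^\pi)$. By the three observations above, $\Alloc^\pi$ is a non-wasteful Lorenz dominating allocation under $\vb{v}^\pi$ that lexicographically maximizes $(v^\pi_1(\cdot), \ldots, v^\pi_n(\cdot))$, i.e., $\Alloc^\pi$ meets exactly the same selection criteria that define $\Alloc'$. Although $\PE$ may not produce a unique allocation, any two allocations satisfying these criteria must share the same value vector (Lorenz domination fixes the sorted vector, and the lexicographic tie-break on $(v^\pi_1(\cdot), \ldots, v^\pi_n(\cdot))$ then fixes the unsorted vector). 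Hence $v^\pi_i(A'_i) = v^\pi_i(\pi(A_i)) = v_i(A_i)$ for every agent $i \in [n]$, which is exactly index-obliviousness.

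The only subtle point---and the one I would spell out most carefully---is that the $\PE$ value vector is uniquely determined even when the optimal allocation is not. Once this is established via Lemma \ref{lemma:MNWisLorenz} together with the lexicographic tie-breaking rule, the rest of the argument is a formal verification that the good-permutation map $\Alloc \mapsto \Alloc^\pi$ commutes with each ingredient of $\PE$'s definition. I expect no serious technical obstacle; the argument generalizes verbatim to any deterministic mechanism whose selection rule is a function of the value tuple alone, which is the remark the paper already flags before the theorem.
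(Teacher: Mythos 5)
Your proof is correct and follows essentially the same route as the paper: both construct the good-permutation bijection $\Alloc \mapsto (\pi(A_1),\ldots,\pi(A_n))$ between non-wasteful allocations under $\vb{v}$ and $\vb{v}^\pi$, use the identity $v^\pi_i(\pi(A_i)) = v_i(A_i)$ to show value tuples are preserved, and conclude that since $\PE$'s selection criteria (Lorenz domination plus lexicographic tie-break) depend only on value tuples, the outputs must agree component-wise. The paper phrases the argument via equality of the sets $\mathcal{T}_{\vb{v}}$ and $\mathcal{T}_{\vb{v}^\pi}$ of achievable value tuples, whereas you track the preservation of each ingredient of $\PE$'s definition explicitly, but these are cosmetic differences in exposition of the same argument.
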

\begin{proof}
Let $\vb{v} = (v_1, v_2, \ldots, v_n)$ be any valuation profile and $\pi : [m] \mapsto [m]$ be any permutation. Write profile $\vb{v}^\pi = (v^\pi_1, \ldots, v^\pi_n)$.  along with allocations $(A_1, \ldots, A_n) = \PE(\vb{v})$ and $(A'_1, \ldots, A'_n) = \PE(\vb{v}^\pi)$. We will show that the non-wasteful allocations $(A_1,\ldots, A_n)$ and $(A'_1, \ldots A'_n)$ satisfy $v_i(A_i) = v^\pi_i(A'_i)$, for each $i \in [n]$, and, hence, $\PE$ is index-oblivious.

Recall that, given a valuation profile $\vb{v}$, mechanism $\PE$ considers the set all non-wasteful Lorenz dominating allocations and among them selects one, $(A_1, \ldots, A_n)$, that lexicographically maximizes $\left(v_1(A_1), v_2(A_2), \ldots, v_n(A_n) \right)$. 

Let $\mathcal{N}_{\vb{v}}$ denote the set of all non-wasteful allocations with respect to profile $\vb{v}$ and $\mathcal{T}_{\vb{v}} \subseteq \mathbb{R}_{\geq 0}^n$ be the corresponding set of valuation tuples, $\mathcal{T}_{\vb{v}} \coloneqq \left\{ \left( v_1(B_1), \ldots, v_n(B_n) \right) \ : \ (B_1, \ldots, B_n) \in \mathcal{N}_{\vb{v}} \right\}$.  
Furthermore, write $\mathcal{N}_{\vb{v}^\pi}$ and $\mathcal{T}_{\vb{v}^\pi}$ be the analogous sets for profile $\vb{v}^\pi$. 

To establish the result we will construct a bijection $\varphi$ between $\mathcal{N}_{\vb{v}}$ and $\mathcal{N}_{\vb{v}^\pi}$. This bijection $\varphi$ will satisfy the property that for every $\mathcal{B} = (B_1, \ldots, B_n) \in \mathcal{N}_{\vb{v}}$ and $(B'_1, \ldots, B'_n) = \varphi(\mathcal{B}) $ we have $v_i(B_i) = v^\pi_i(B'_i)$, for all $i$. Therefore, the existence of such a  bijection shows that $\mathcal{T}_{\vb{v}} = \mathcal{T}_{\vb{v}^\pi}$. Note that $\PE$ selects the optimal (with respect to the criteria mentioned above) tuple in $\mathcal{T}_{\vb{v}}$ for profile $\vb{v}$.  Analogously, $\PE$ selects the optimal tuple in $\mathcal{T}_{\vb{v}^\pi}$ for for profile $\vb{v}^\pi$. Since the sets are the same, we obtain the desired equalities $v_i(A_i) = v^\pi_i(A'_i)$  for all $i \in [n]$. This will establish the index-obliviousness of $\PE$.

Now, to complete the proof we will describe the bijection. For any (partial) allocation $\mathcal{B} = (B_1, \ldots, B_n)$, that is non-wasteful with respect to $\vb{v}$, we set $(B'_1, \ldots, B'_n) = \varphi(\mathcal{B})$ with $B'_i = \pi(B_i)$ for all $i \in [n]$. Since $\pi$ is a permutation of $[m]$, the constructed map $\varphi$ is a bijection. First, note that, with respect to $\vb{v}^\pi$, the (partial) allocation $(B'_1, \ldots, B'_n)$ is non-wasteful: $v^\pi_i(B'_i) = v_i \left( \pi^{-1} \left( B'_i \right) \right) = v_i \left( \pi^{-1} \left( \pi(B_i) \right) \right) = |B_i| = |B'_i|$; here, the first equality follows from the definition of $v^\pi_i$ and the second from the definition of $B'_i$. The third equality follows from the fact that  $\mathcal{B}$ is non-wasteful. We can analogously start with an (partial) allocation $(B'_1, B'_2, \ldots, B'_n)$, that is non-wasteful with respect to $\vb{v}^\pi$, and show that its pre-image under $\varphi$ is non-wasteful with respect to $\vb{v}$; specifically, $v_i(B_i) = v_i( \pi^{-1}(B'_i)) = v^\pi_i (B'_i) = |B'_i| = |B_i|$. Therefore, map $\varphi$ is a bijection between $\mathcal{N}_{\vb{v}}$ and $\mathcal{N}_{\vb{v}^\pi}$. 

Furthermore, 
\begin{align*}
v^\pi_i(B'_i) & = v_i(\pi^{-1}(B'_i)) \tag{by definition of $v^\pi_i$} \\
& = v_i(\pi^{-1}(\pi(B_i))) \tag{$B'_i = \pi(B_i)$} \\
& = v_i(B_i).
\end{align*}
Therefore, the valuation tuples are preserved under $\varphi$ and we get that $\mathcal{T}_{\vb{v}} = \mathcal{T}_{\vb{v}^\pi}$. As mentioned previously, this equality ensures that $\PE$ selects the same optimal value tuple under both the profiles $\vb{v}$ and $\vb{v}^\pi$, i.e., $v_i(A_i) = v^\pi_i(A'_i)$ for all $i$. The theorem stands proved.  
\end{proof}

\end{document}